\newcommand{\vardbtilde}[1]{\tilde{\raisebox{0pt}[0.85\height]{$\tilde{#1}$}}}
\newtheoremstyle{note}
{3pt}
{3pt}
{}
{}
{\bfseries\upshape}
{.}
{.5em}
{}
\theoremstyle{note}
\newtheorem{lemma}{Lemma}
\newtheorem{theorem}{Theorem}
\begin{document}
\title{On Energy Allocation and Data Scheduling  in Backscatter Networks with Multi-antenna Readers }
	
\author{Mohammad~Movahednasab,  Mohammad~Reza~Pakravan,~\IEEEmembership{Member,~IEEE,} Behrooz~Makki,~\IEEEmembership{Senior~Member,~IEEE,} and Tommy~Svensson,~\IEEEmembership{Senior~Member,~IEEE}

\thanks{M. Movahednasab and M. R. Pakravan are with  Sharif University of Technology, Tehran, Iran (Emails: movahednasab@ee.sharif.edu, pakravan@sharif.edu).
Behrooz Makki is in Gothenburg, Sweden (Email: behrooz.makki@gmail.com).
Tommy Svensson is with Chalmers University of Technology, Gothenburg, Sweden (Email: 
tommy.svensson@chalmers.se).}}

\maketitle

\begin{abstract}
In this paper, we study the throughput utility functions in buffer-equipped monostatic backscatter communication networks with multi-antenna Readers. In the considered model, the backscatter nodes (BNs) store the data in their buffers before transmission to the Reader. We investigate three utility functions, namely, the sum, the proportional and the common throughput. We design online admission policies, corresponding to each utility function, to determine how much data can be admitted in the buffers. Moreover, we propose an online data link control policy for jointly controlling the transmit and receive beamforming vectors as well as the reflection coefficients of the BNs. The proposed policies for data admission and data link control jointly optimize the throughput utility, while stabilizing the buffers. We adopt the min-drift-plus-penalty (MDPP) method in designing the control policies. Following the MDPP method, we cast the optimal data link control and the data admission policies as solutions of two independent optimization problems which should be solved in each time slot. The optimization problem corresponding to the data link control is non-convex and does not have a trivial solution. Using Lagrangian dual  and quadratic transforms, we find a closed-form iterative solution. Finally, we use the results on the achievable rates of finite blocklength codes to study the system performance in the cases with short packets. As demonstrated, the proposed policies achieve optimal utility and stabilize the data buffers in the BNs. 
\end{abstract}
\begin{IEEEkeywords}
Backscatter communication, radio frequency identification, fairness, min-drift-plus-penalty, Lyapunov optimization, max-min throughput, proportional throughput, sum throughput, finite blocklength analysis,  wireless energy transfer, energy harvesting, green communications, Internet of things, IoT.
\end{IEEEkeywords}

\section{INTRODUCTION}

\IEEEPARstart{W}{ith} the emergence of the Internet of things (IoT) era, the number of connected devices is increasing rapidly. It is predicted  that there will be around 5 billion connected devices in 2025 \cite{Ericsson}, a great number of which will be portable and low-power. This explosion of the low-powered devices calls for replacing batteries with new energy  sources to ensure  continuous operation of devices. The major challenges with the battery-powered devices are the increase of the devices' form factor and the high cost for recharging and replacement of the batteries \cite{FRezaei}. Moreover, in some applications such as biomedical implants inside human bodies or distributed monitoring sensors, replacing the batteries may be infeasible \cite{Zeng, Han, Liu2019}. 

Backscatter communication networks (BCNs) are considered to be a prominent solution for low-power and low-cost communications.  A BCN compromises a Reader and, possibly, multiple backscattering nodes (BNs) with most  bulky and active communication modules moved to the Reader.  The BNs transmit data to the Reader via reflecting and modulating the incident radio frequency signal by adapting the level of antenna mismatch to vary the reflection coefficient \cite{Han}. Based on the source of the radio frequency signal, which supplies the required energy for communication, three  configurations  for the  BCN, namely, monostatic, bistatic and ambient,  are considered. In monostatic configuration, the Reader emits a carrier and receives the backscattered data, while in bistatic configuration one or several power beacons emit carries rather than the Reader itself \cite{Kimionis}. Moreover, in ambient BCNs there is no dedicated energy transmitter and the BNs backscatter the existing radio frequency signals, e.g., WiFi or digital television signals \cite{Huynh}. 

 The low-energy transmission efficiency is a major problem in BCNs.  However, the studies in, e.g.,  \cite{Yang, Chen, Mishra_Sum, Mishra_Min, Yang2018,Idrees2020,Li2018,Tao, ChenChen1, ChenChen2, Boyer, He2020}, show that exploiting multiple antennas  increases the energy efficiency remarkably. Considering the energy required for channel training,  \cite{Yang} optimizes the transmit  beamforming to maximize the harvested energy by the BNs. In \cite{Chen}, a blind adaptive beamforming scheme is introduced to increase the reading range of the radio frequency identification tags. Also, \cite{Mishra_Sum} and \cite{Mishra_Min}  propose low complexity algorithms for optimizing the transmit and receive beamforming to maximize the sum and  minimum throughput of all BNs, respectively. Considering ambient BCNs,  \cite{Yang2018, Idrees2020, Li2018,Tao} study the receive beamforming optimization of a multi-antenna Reader. Whereas,  \cite{ChenChen1} and \cite {ChenChen2}  study optimal detectors for ambient BCNs with multi-antenna BNs and a single antenna Reader. In \cite{Boyer} and \cite{He2020}, the  achievable diversity order of a multiple-input-multiple-output monostatic BCN is studied. Furthermore, optimizing the reflection coefficients of the BNs is studied in, e.g., \cite{Xiao2019,Ye2019, Yang2020, Krikidis2018,Yang2019,Gong2018}, where maximizing energy efficiency, throughput or  fairness in the BCNs is investigated.

Along with energy efficiency, one of the challenges of the BCNs is the unpredictability of the channel state and the available energy in ambient  configuration, which makes the optimal scheduling difficult. To tackle this problem,  stochastic approaches are adopted in \cite{Wen2019,   Huynh2019,   Anh2019,     Hoang2017, Liu2020} to design  online control algorithms. Specifically, in \cite{Wen2019, Huynh2019,   Anh2019}, long-term throughput optimization in ambient BCNs is studied through reinforcement learning methods. 
Whereas, in \cite{Hoang2017}  the authors use reinforcement learning to propose a data admission  and data scheduling policy for a monostatic BCN. Finally, \cite{Liu2020} uses min-drift-plus-penalty (MDPP) method to maximize throughput in an ambient BCN. 

In this work, we concentrate on optimizing different throughput utility functions  in monostatic BCNs with multi-antenna Readers. In our considered model, the BNs store the data in their buffers before transmission to the Reader. We investigate three different utility functions including the sum, the proportional and the common throughput. We design three online admission policies, corresponding to  each throughput utility function, to determine how much data can be admitted in the buffers in each time slot. Moreover, we propose an online  data link control policy for jointly controlling the transmit and receive beamforming vectors as well as the reflection coefficients of the BNs. {The proposed policies for data admission and data link control jointly optimize the throughput utility, while stabilizing the buffers.}
\color{black}

Considering the channel state randomness, we adopt the MDPP  method in designing the control policies. Following the MDPP method, we cast the optimal data link control  and the data admission policies as solutions of two independent optimization problems which should be solved in each time slot.  The optimization problem corresponding to the data link control is non-convex and does not have a trivial solution. We transform this problem into an equivalent form, which  makes it possible to find a closed-form iterative algorithm that improves the utility in each iteration. Furthermore, considering each utility function, we solve the corresponding data admission problem and find closed-form solutions. Finally, we use the results on the achievable rates of finite blocklength codes \cite{Polyanskiy, Makki2014, Makki2016, Haghifam} to study the system performance in the cases with short packets.

The differences in the considered system model and problem formulation   makes our paper different from those in the literature. Specifically, this paper is different from \cite{Yang2018, Idrees2020, Li2018,Tao, ChenChen1, ChenChen2,   Xiao2019 ,Ye2019,   Yang2020,    Yang2019,      Wen2019, Huynh2019,   Anh2019,Hoang2017,Liu2020}, because we study a monostatic BCN, where we jointly optimize the transmit and receive beamforming vectors. As opposed to \cite{Yang, Chen, Mishra_Sum, Mishra_Min, Yang2018,Idrees2020,Li2018,Tao, ChenChen1, ChenChen2, Boyer, He2020, Xiao2019,Ye2019, Yang2020, Krikidis2018,Yang2019,Gong2018}, we consider BNs equipped with buffers, and hence follow  an stochastic approach to propose an online policy to maximize the throughput utilities and stabilize the buffers.  Moreover, different from the state-of-the-art works, we study different throughput utilities in a unified framewok, perform finite block-length analysis, and compare the results in terms of fairness.

Our analytical and simulations results show the significance of our proposed control policy. Specifically, we show in Theorem \ref{th:optimal} that, under the optimal solutions of the link scheduling and data admission problems, the average throughput utility will be within $\mathcal{O}({1\over V})$ of the optimal utility, where $V$ is a control parameter. Moreover, the level of the stored data in the buffers will be kept under a certain level of $\mathcal{O}({V})$. Furthermore, we compare our proposed scheme with the benchmarks  \cite{Mishra_Sum} and \cite{Mishra_Min}, which verifies the optimality of our proposed policy. Particularly, \cite{Mishra_Sum} and \cite{Mishra_Min} propose optimal control policies to maximize the sum and the minimum channel rate among BNs in a time slot base, respectively. However, the BNs in these two papers do not consider buffers. We show that the sum throughput utility under our proposed policy achieves the  maximum sum channel rate in  \cite{Mishra_Sum}, while stabilizing the data buffers. Whereas, our proposed policy for optimizing the common throughput improves the result in \cite{Mishra_Min}, which shows the significant effect of adopting buffers in the BNs.

The rest of the paper is organized as follows. The considered system model and our problem formulation are illustrated in Section \ref{sec:sysModel}. The proposed control policy as well as its performance analysis are presented in Section \ref{sec:propPolicy}. Simulation results are presented in Section \ref{sec:Numerical}, and finally, Section \ref{sec:conclude} concludes the paper.

Notation: Matrices and vectors are denoted by small and
capital boldface letters, respectively. Moreover, unless otherwise
mentioned, vectors are single-column matrices. Also, $(.)^T$ , $(.)^H$
and $(.)^\ast$ denote transpose, conjugate transpose and elementwise conjugate of a matrix, respectively.  Then,  $|.|$ denotes
the absolute value (or the modulus for complex numbers) and $\lVert.\lVert$ denotes the norm of vectors. Finally, $\mathbb{R}$ and $\mathbb{C}$ represent the real and complex number sets, respectively, $\operatorname{Re}\{.\}$ is used to denote the real part of a complex number, $\operatorname{E}\{.\}$ is the expectation, { $I_{N}$ represents the $N\times N$ identity matrix}, and $[a]^+ = \max\{a, 0\}, \;\forall a \in \mathbb{R}$.

\section{System Model and Problem Formulation}\label{sec:sysModel}
An example of the considered network structure is depicted in Fig. \ref{fig:sampNet}. We consider a BCN consisting of a multi-antenna Reader and $N$ single-antenna BNs.  Let  $\text{BN}_n, \;\forall n \in\mathcal{N} \triangleq\{1,\ldots,N\}$, denote the $n$-th BN in the network. The BNs  transmit the  data stored in their buffers to the Reader through backscattering the energy emitted by the Reader. While the BNs rely on the Reader energy for  data transmission, they have an internal battery which powers up their low-power circuits. Accordingly, the BNs are semi-passive devices and are able to handle data sensing or other internal tasks without  the Reader energy \cite{Vannucci2008,Lu2018,Fasarakis2015}. 
The Reader is equipped with $M$ antennas to focus its emitted energy towards the intended BNs. Moreover, the presence of multiple antennas enables the Reader to  receive data from multiple BNs through receive beamforming technique. The Reader adopts a \textit{kind of} full duplex structure and uses a common set of antennas in the transmit and receive path. The transmitted carrier and the received backscatter signal are then separated  using circulators. It is important to note that, unlike the conventional full-duplex wireless links, which transmit a modulated signal, in BCNs the unmodulated carrier leakage to the receiver can be  efficiently surpassed with low complexity methods \cite{Villame2010, Hao2018}.   
\begin{figure}
\centering
  \includegraphics[width = 0.6\textwidth]{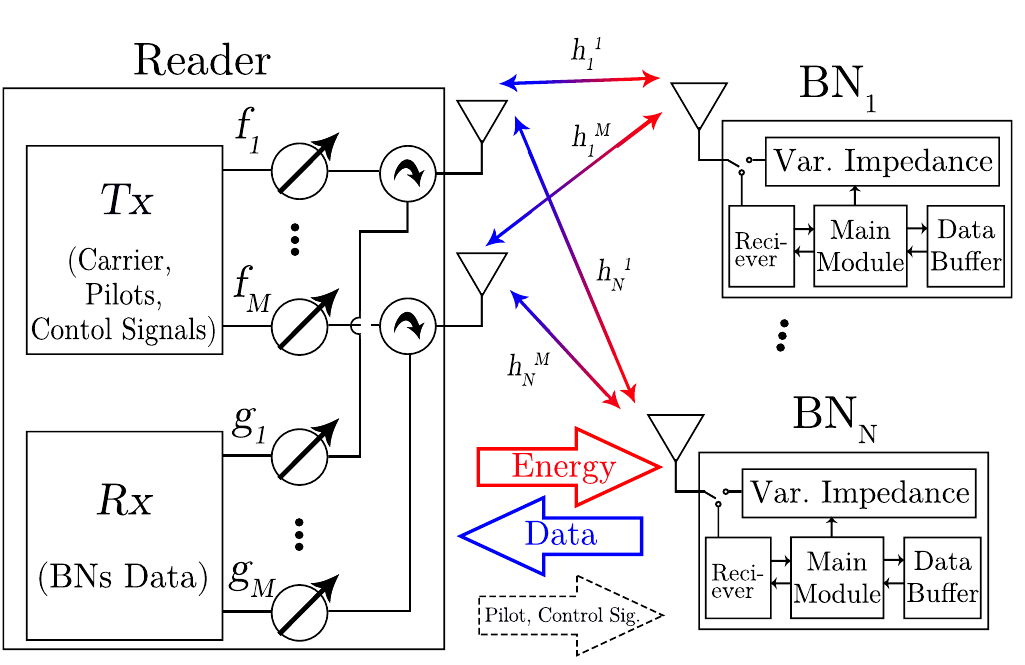}
\caption{An example of the Reader and BNs structure. The Reader transmits energy in the downlink, while the BNs transmit data through backscattering the received energy by changing their variable impedance. During the scheduling time, the BNs switch to receive mode and the Reader transmits control signals.}
\label{fig:sampNet}
\end{figure}

The time horizon is slotted to fixed duration time slots.  At the beginning of each time slot, a small portion of the time is devoted to channel estimation and node discovery. Note that our proposed policy only requires the end-to-end CSI, $\bm{h}_n(t)\bm{h}^T_n(t)$, which simplifies the CSI estimation process. The Reader then schedules its transmit and receive beamforming for the rest of the time slot. Furthermore, the Reader notifies the BNs how much new data they can admit in their buffers and also  sets the values of the reflection coefficients of the BNs. In the rest of the time slot, the Reader transmits energy, and concurrently  receives the backscattered signal from the BNs. 
The channel coefficients are assumed to be constant in a time slot but vary randomly and independently in consecutive time slots.  In  time slot $t$, $h_n^m(t) \in \mathbb{C}, m = 1,\ldots,M$, denotes the channel gain of the reciprocal link between the $m$-th antenna of the Reader and  $\text{BN}_n$, and $\bm{h}_n(t) = (h_n^1(t), \ldots, h_n^m(t))^T$ is the channel coefficients vector associated with $\text{BN}_n$.

\begin{figure}
\centering
\includegraphics[width = 0.5 \textwidth]{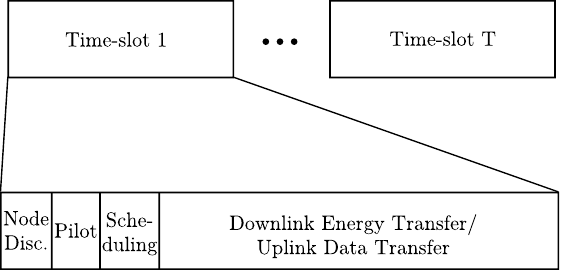}
\caption{A time slot structure. At the the beginning of each time slot, a small portion of time is devoted to node discovery, channel estimation through pilot transmission and scheduling. In the rest of the time slot the Reader transmits energy, and the BNs transmit data through backscattering. }
\label{fig:time}
\end{figure}

\subsection{Downlink Energy Transfer}
The Reader continuously transmits a carrier $c(t)$, which is the source of the energy required for uplink data transmission. Transmission beamforming technique is adopted in the downlink to increase the energy transmission efficiency. In time slot $t$, $f_m(t) \in \mathbb{C}, m= 1,\ldots, M$, denotes the gain of the $m$-th energy transmit path of the Reader, and $\bm{f}(t) \triangleq \big(f_1(t), \ldots, f_M(t)\big)^T \in \mathbb{C}^M$ denotes the transmit beamforming vector. The carrier $c(t)$ has unit power and $\lVert\bm{f}(t)\lVert^2 \leq P$ determines the transmission power of the Reader, with $P$ denoting the maximum transmission power. Considering the transmit beamforming vector and the channel gains, the received signal in $\text{BN}_n$ is given by
\begin{align}\label{eq:eDef}
e_n(t) = c(t)\bm{h}_n^T (t)\bm{f}(t) + v_n(t),
\end{align}
where $v_n(t)$ is the circular additive white Gaussian noise (AWGN) with variance $\sigma^2_v$. 

\subsection{Uplink Data Transfer }
The BNs modulate the backscattered signal by changing the impedance connected to the BN's antenna. In time slot $t$, $x_n(t)\in \mathbb{C}$, with $\operatorname{E}\{|x_n(t)|^2\} = 1$, and $\alpha_n(t) \in [0\; \alpha_\text{max}]$ denote the modulated signal and the reflection coefficient  of $\text{BN}_n$, respectively. Here, $\alpha_\text{max}$ denotes the maximum reflections coefficient of the BNs. The Reader receives  the sum of the backscattered signals from all BNs, that is,
\begin{align}\label{eq:RSig}
z(t) \triangleq \sum_{n\in\mathcal{N}} z_n(t) + \bm{w}(t),
\end{align}
where, $z_n(t)$ is the received signal from  $\text{BN}_n$, i.e.,
\begin{align}\label{eq:zDef}
z_n(t) \triangleq \alpha_n(t)e_n(t)x_n(t),
\end{align}
and $\bm{w}(t)\in \mathbb{C}^N$ denotes the AWGN with covariance matrix $\sigma_w^2I_M$.

The Reader separates the signal received from different BNs by multiplying the receive beamforming vector corresponding to each user with the received signal $z(t)$. Let, $\bm{g}_n(t) \triangleq \big(g_n^1(t), \ldots, g_n^M(t)\big)^T\in \mathbb{C}^M$, with $\lVert\bm{g}(t)\rVert^2 = 1$, denote the receive beamforming vector for $\text{BN}_n$, where $g_n^m(t) \in \mathbb{C},\;\forall m$, is the gain of the $m$ receive path.  The detected signal of  $\text{BN}_n$ is written as
\begin{align}
r_n(t) \triangleq \bm{g}_n^H(t)z_n(t) + \sum_{\tilde{n}\in \mathcal{N}/n}\bm{g}_n^H(t)z_{\tilde{n}}(t) + \bm{g}_n^H(t)\bm{w}(t).
\end{align}
Considering the definitions of $z_n(t)$ in \eqref{eq:zDef} and $e_n(t)$ in \eqref{eq:eDef},  the signal to interference plus noise ratio (SINR) of the signal of $\text{BN}_n$ at the decoder, $\mathcal{S}_n(t)$, is given by
\begin{align}\label{eq:SINR}
\mathcal{S}_n(t) \triangleq  \frac{\left | \alpha_n(t)  \bm{g}_n^H(t) \bm{h}_n(t) \bm{h}_n^T(t) \bm{f}(t)\right |^2 }{\sigma^2_w + \sum_{\tilde{n}\in \mathcal{N}/n} \left | \alpha_{\tilde{n}}(t)  \bm{g}_n^H(t) \bm{h}_{\tilde{n}}(t) \bm{h}_{\tilde{n}}^T(t) \bm{f}(t)\right |^2}.
\end{align}
Note that in practical systems the backscattered noise $v_n(t)h_n(t)$  is negligible compared to $w_n(t)$ due to channel attenuation \cite{Mishra_Min, Mishra_Sum, Lyu}, and hence it is neglected in \eqref{eq:SINR}. Moreover, the interference due to the unmodulated carrier leakage from the transmitter can be efficiently surpassed  \cite{Li2018, Chen, Mishra_Min, Mishra_Sum, Villame2010, Hao2018} and, accordingly, it is neglected in the SINR formulation \eqref{eq:SINR}.  Considering the SINR of the received backscattered signal and sufficiently long codewords, the data transmission rate of $\text{BN}_n$ in time slot $t$ is obtained as
\begin{align}\label{eq:Rn}
R_n(t) =W \log\big(1+\mathcal{S}_n(t)\big),
\end{align}
where $W$ is the channel bandwidth (for performance analysis in the cases with short packets, see Section \ref{sec:finite}).
\subsection{ Data Admission}
The BNs store the incoming data, e.g., the sensed data from environment, in their buffers.   There is a traffic shaping filter at the input of  each buffer in the BNs which limits the number of input bits in  time slot $t$  by $D_n(t) \in [0\;\;D_\text{max}],\;\forall n$, with $D_\text{max}$ being a constant.  Accordingly, the number of  stored data bits in the buffer of $\text{BN}_n$ in time slot $t$, denoted by $Q_n(t)$, evolves as
\begin{align}\label{eq:QueueEvolve}
Q_n(t+1)  = \big [Q_n(t) - R_n(t)\big ]^+ +D_n(t).
\end{align}
\subsection{Network Controller}
There is a  network controller at the Reader, which has access to the channel state information (CSI), $\bm{h}_n(t)\bm{h}^T_n(t)$, and the level of the stored data in the buffers,  $\bm{Q}(t) \triangleq \big(Q_1(t), \ldots, Q_N(t) \big)$. In each time slot, the controller determines the transmit beamforming vector  $\bm{f}(t)$ and the receive beamforming matrix $\bm{G}(t) \triangleq \big(\bm{g}_1(t),\ldots,\bm{g}_N(t)\big) \in \mathbb{C}^{M\times N} $. Moreover, the controller determines the reflection coefficient vector, $\bm{\alpha}(t) \triangleq \big(\alpha_1(t)\ldots, \alpha_N(t)\big)$, and the data admission vector, $\bm{D}(t) \triangleq \big(D_1(t),\ldots,D_N(t)\big)$. 
We design the network control algorithm to maximize a throughput utility function $\mathcal{U}(t)$ while keeping the data buffers stable. We consider three different throughput utility functions, including sum throughput utility 
 \begin{align}\label{eq:sumThDef}
\mathcal{U}(t) = \mathcal{U}_s\big(\bm{D}(t)\big) \triangleq \sum_{n \in \mathcal{N}} D_n(t),
\end{align}
proportional throughput utility
 \begin{align}\label{eq:propThDef}
\mathcal{U}(t) = \mathcal{U}_p\big(\bm{D}(t)\big) \triangleq \sum_{n \in \mathcal{N}} \log\big(1+D_n(t)\big),
\end{align}
and common throughput utility 
 \begin{align}\label{eq:comThDef}
\mathcal{U}(t) = \mathcal{U}_c\big(\bm{D}(t)\big) \triangleq\min_{n\in \mathcal{N}}\big\{D_n(t)\big\}.
\end{align}
The three introduced utility functions result in different  throughput and fairness tradeoffs. While the sum throughput utility is the most greedy function in terms of throughput, the common throughput utility is the most fair utility.  The proportional throughput utility, on the other hand, balances the throughput and fairness. Interestingly, each of these utility functions is useful in certain applications. 
\subsection{Problem Formulation}
Let $\mathcal{U}^\star$ denote the maximum utility achieved by the optimal control policy among all policies that stabilize the buffers.  In this way, considering the constraints on the transmission power, reflection coefficient and data admission, the network controller with maximum utility can be formulated as the solution of
\begin{maxi!}
{\substack{\bm{f}(t), \bm{G}(t), \bm{\alpha}(t), \bm{D}(t)  }} { \lim_{T \rightarrow \infty}\frac{1}{T}\sum_{t=0}^{T-1}\operatorname{E}\left\{\mathcal{U}(t)\right\}} {\label{prob:mainProbDef}}{ \mathcal{U}^\star = }
\addConstraint{ Q_n(t) < \infty, \; \forall n \label{eq:stableConstraint}}
\addConstraint{\lVert \bm{f}(t)\lVert^2 \leq P\label{eq:powerConstraint}}
\addConstraint{ \alpha_n(t) \in [0\;\;\alpha_\text{max}],\;\forall n\label{eq:alphaConstraint}}
\addConstraint{ D_n(t) \in [0\;\; D_\text{max}],\;\forall n, \label{eq:DConstraint}}
\end{maxi!}
where the expectation in \eqref{prob:mainProbDef} is with respect to the channel randomness. Constraint \eqref{eq:stableConstraint} ensures the stability of the buffers.  Constraints \eqref{eq:powerConstraint} and \eqref{eq:alphaConstraint}  are limitations on transmission power and reflection coefficients, respectively. Moreover, Constraint \eqref{eq:DConstraint} determines the maximum number of  data bits that  can be admitted to the buffers in each time slot. 

Problem \eqref{prob:mainProbDef} is a stochastic utility optimization problem. This problem can be tackled by dynamic programing (DP) methods. However, DP methods require the statistical knowledge of the channel state process, which may not be available. Here, we use the MDPP method \cite[Chapter 4]{Neely} to propose a solution for Problem \eqref{prob:mainProbDef}.  MDPP is a general framework for optimizing  time averages, with possibly  time average constraints (see, e.g., \cite{Huang2012,Neely2006,Movahednasab2020,MovahednasabICC,RRezaei2020,RezaeiPIMRC,RezaeiGlobe,Hadi2019} for different applications of the MDPP method).  Using this framework, a problem with a time average objective function  is reduced to a sub-problem which should be solved in each time slot.
 Accordingly, following the MDPP approach,  we formulate the optimal $\bm{f}(t), \bm{G}(t), \bm{\alpha}(t)$ and  $\bm{D}(t)$ in each time slot  as the solution of an optimization problem  with parameters $\bm{Q}(t)$ and $\bm{h}_n(t)\bm{h}_n^T(t)$. The formulated problem  is non-convex and does not have a trivial solution. However, we use quadratic and Lagrangian dual transforms \cite{Frac1,Frac2} to propose an iterative algorithm for finding the optimal control variables. 

\section{The Proposed Control Policy}\label{sec:propPolicy}
We follow the MDPP method to propose an online control policy that solves Problem \eqref{prob:mainProbDef}.  In summary, we follow these steps:
\begin{enumerate}
\item We define the Lyapunov function 
\begin{align}
L(t) \triangleq \frac{1}{2}\sum_{n\in \mathcal{N}} Q_n^2(t),
\end{align}
which is a scalar measure of the stored data in all buffers.

\item  We define the drift-plus-penalty (DPP) function 
\begin{align}\label{eq:driftpluspenalty}
\Delta_p\big( L(t)\big) \triangleq \operatorname{E}\Big\{ L(t+1) - L(t) \Big|\bm{Q}(t)  \Big\} - V\operatorname{E}\Big\{ \mathcal{U}(t) \Big\},
\end{align}
where $V>0$ is a control parameter. The first term in \eqref{eq:driftpluspenalty} shows the drift of the Lyapunov function in successive time slots. Positive or negative drift values indicate that the stored data in the buffers have increased or decreased, respectively. Moreover, the second term is a penalty which increases as the utility decreases.
Intuitively, we  expect that under a control policy, which minimizes  \eqref{eq:driftpluspenalty} in each time slot, the buffers will be stable and also the utility will be maximized. 
\item We introduce an upper bound for  the DPP function in Lemma \ref{lem:upperbound}.  Using the Lyapunov optimization  theorem \cite[Theorem 4.2]{Neely},   we  show in Theorem \ref{th:optimal} that under a policy which minimizes the developed upper-bound, the utility is maximized and the level of the stored data in the buffers is upper-bounded. 
\item To find the optimal control policy, we need to minimize the derived upper-bound in Lemma \ref{lem:upperbound}, which includes a non-convex function of $\bm{f}(t), \bm{G}(t)$ and $\bm{\alpha}(t)$. We use  Lagrangian dual  and quadratic transfers \cite{Frac1, Frac2}, and  propose an iterative algorithm for finding the optimal $\bm{f}(t), \bm{G}(t)$ and $\bm{\alpha}(t)$. We also derive the optimal data admission policy $\bm{D}(t)$. 
\end{enumerate}
The details of the analysis are explained as follows. We first introduce an upper-bound for $\Delta_p\big( L(t)\big)$ in Lemma \ref{lem:upperbound}. 
\begin{lemma}\label{lem:upperbound}
For the DPP function  \eqref{eq:driftpluspenalty}, we have
\begin{align}\label{eq:upperbound}
\begin{split}
 \Delta_p\big( L(t)\big) &\leq \Delta_u(t) \triangleq B - \sum_{n \in\mathcal{N}}\operatorname{E}\bigg\{Q_n(t)R_n(t)\Big|\bm{Q}(t)\bigg\}+\\
&\qquad\qquad\sum_{n \in\mathcal{N}}\operatorname{E}\bigg\{Q_n(t)D_n(t)\Big|\bm{Q}(t)\bigg\}  -V\operatorname{E}\bigg\{\mathcal{U}(t)\Big |\bm{Q}(t)\bigg\},
\end{split}
\end{align}
where 
\begin{align}
B \triangleq \frac{N}{2}\Big(D_\text{max} + R_\text{max}\Big).
\end{align}
Here, $R_\text{max} > 0$ is a sufficiently large constant, such that we always have $R_n(t)\leq R_\text{max}$.
\end{lemma}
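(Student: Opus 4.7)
The plan is to perform a standard quadratic Lyapunov drift computation on the queue recursion \eqref{eq:QueueEvolve}, and then conditionally subtract the penalty term to obtain the DPP bound.

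First, I would start from the per-node recursion $Q_n(t+1) = [Q_n(t) - R_n(t)]^+ + D_n(t)$ and square both sides. Expanding, one gets
\begin{align*}
Q_n^2(t+1) = \bigl([Q_n(t)-R_n(t)]^+\bigr)^2 + 2 D_n(t) [Q_n(t)-R_n(t)]^+ + D_n^2(t).
\end{align*}
Two elementary inequalities do the heavy lifting. For any real number $a$, $([a]^+)^2 \le a^2$, so $([Q_n(t)-R_n(t)]^+)^2 \le Q_n^2(t) - 2 Q_n(t) R_n(t) + R_n^2(t)$. Also, since $Q_n(t) \ge 0$ and $D_n(t) \ge 0$, $[Q_n(t)-R_n(t)]^+ \le Q_n(t)$ gives $2 D_n(t)[Q_n(t)-R_n(t)]^+ \le 2 Q_n(t) D_n(t)$. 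Combining these,
\begin{align*}
Q_n^2(t+1) - Q_n^2(t) \le R_n^2(t) + D_n^2(t) - 2 Q_n(t) R_n(t) + 2 Q_n(t) D_n(t).
\end{align*}

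Next, I would sum over $n \in \mathcal{N}$ and divide by two to obtain a pointwise bound on $L(t+1)-L(t)$. The constant term $\tfrac{1}{2}\sum_n(R_n^2(t) + D_n^2(t))$ is then bounded deterministically using the boundedness of the data admission ($D_n(t) \le D_{\max}$ from \eqref{eq:DConstraint}) and the assumed uniform bound $R_n(t) \le R_{\max}$ stated in the lemma, which yields a constant matching the definition of $B$ up to how $R_{\max}$ and $D_{\max}$ are specified. Taking the conditional expectation given $\bm{Q}(t)$ and finally subtracting $V\operatorname{E}\{\mathcal{U}(t)\mid \bm{Q}(t)\}$ from both sides produces exactly the claimed bound \eqref{eq:upperbound}.

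There is no genuinely hard step here — the whole argument is a one-slot quadratic drift calculation of the type used throughout Lyapunov optimization. The only points that require mild care are (i) verifying the inequality $2 D_n(t)[Q_n(t)-R_n(t)]^+ \le 2 Q_n(t) D_n(t)$, which needs nonnegativity of $D_n(t)$, and (ii) checking that the constant $B$ as defined absorbs the $\tfrac12\sum_n(R_n^2(t)+D_n^2(t))$ term uniformly across time slots, which is immediate once the deterministic upper bounds on $R_n(t)$ and $D_n(t)$ are in place.
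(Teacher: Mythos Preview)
Your proposal is correct and follows essentially the same route as the paper's proof: square the queue recursion, apply the two inequalities $([a]^+)^2 \le a^2$ and $[Q_n(t)-R_n(t)]^+ \le Q_n(t)$, sum and halve, bound $\tfrac{1}{2}\sum_n(R_n^2+D_n^2)$ by the constant $B$, then take conditional expectation and subtract the penalty. Your caveat about $B$ is apt---the paper's own proof actually bounds the quadratic remainder by $\tfrac{N}{2}(R_{\max}^2+D_{\max}^2)$, so the statement's $B=\tfrac{N}{2}(D_{\max}+R_{\max})$ appears to be a typo for the squared version.
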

\begin{proof}
See Appendix \ref{proof:upperbound}.
\end{proof}
The upper-bound function $\Delta_u(t)$ in \eqref{eq:upperbound} is the starting point for deriving the optimal policy. We formulate the optimization problem
\begin{mini!}
{\bm{f}(t), \bm{G}(t), \bm{\alpha}(t), \bm{D}(t)}{\Delta_u(t)}{\label{prob:upperBound}}{}
\addConstraint{\eqref{eq:powerConstraint}, \eqref{eq:alphaConstraint}, \eqref{eq:DConstraint}},
\end{mini!}
that, given $\bm{Q}(t)$ and $\bm{h}_n(t)\bm{h}_n^T(t)$ in each time slot, finds the values of $\bm{f}(t), \bm{G}(t), \bm{\alpha}(t)$ and $\bm{D}(t)$  minimizing $\Delta_u(t)$. In Theorem \ref{th:optimal}, we show that under a policy which solves Problem \eqref{prob:upperBound} the level of the stored bits in the buffers are bounded, and we can push the utility arbitrarily close to  $\mathcal{U}^\star$.
\begin{theorem}\label{th:optimal}
Suppose that $\bm{f}(t), \bm{G}(t), \bm{\alpha}(t)$ and $\bm{D}(t)$ in each time slot are determined according to the solution of Problem \eqref{prob:upperBound}. Then, we have
\begin{enumerate}
\item With utility functions \eqref{eq:sumThDef}, \eqref{eq:propThDef} and \eqref{eq:comThDef}, the level of the stored data in the buffers are upper-bounded by
\begin{align}\label{eq:buffUpper}
Q_n(t) \leq V + D_\text{max}, \;\; \forall n.
\end{align}
\item The average utility satisfies
\begin{align}\label{eq:Optimality}
\lim_{T \rightarrow \infty}\frac{1}{T}\sum_{t=0}^{T-1}\operatorname{E}\left\{\mathcal{U}(t)\right\} \geq \mathcal{U}^\star - \frac{B}{V}.
\end{align}
\end{enumerate}
\end{theorem}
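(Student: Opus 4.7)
My plan is to prove the two parts separately: part~(1) by an induction on $t$ that exploits the structure of the admission subproblem inside Problem~\eqref{prob:upperBound}, and part~(2) by the standard Lyapunov-optimization comparison that bounds $\Delta_u(t)$ under the proposed minimizing policy against its value under a stationary randomized policy already known to achieve $\mathcal{U}^\star$.

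For the buffer bound \eqref{eq:buffUpper}, I would first observe that Problem~\eqref{prob:upperBound} decouples: the link-control variables $\bm{f}(t), \bm{G}(t), \bm{\alpha}(t)$ enter $\Delta_u(t)$ only through $-\sum_n Q_n(t) R_n(t)$ subject to \eqref{eq:powerConstraint}--\eqref{eq:alphaConstraint}, while the admission vector is the minimizer of
\[
J\big(\bm{D}(t)\big) \triangleq \sum_n Q_n(t) D_n(t) - V\,\mathcal{U}\big(\bm{D}(t)\big),\qquad \bm{D}(t) \in [0, D_\text{max}]^N.
\]
For the sum utility \eqref{eq:sumThDef}, $J$ collapses to $\sum_n (Q_n(t) - V) D_n(t)$, minimized by $D_n(t) = 0$ whenever $Q_n(t) > V$. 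For the proportional utility \eqref{eq:propThDef}, $J$ separates across $n$ and the first-order condition gives $D_n(t) = \min\{D_\text{max},\,[V/Q_n(t) - 1]^+\}$, so $D_n(t) > 0 \Rightarrow Q_n(t) < V$. For the common utility \eqref{eq:comThDef}, the $\min_n$ couples coordinates, but fixing $d \triangleq \min_n D_n(t)$ and using $Q_n(t) \geq 0$ forces $D_n(t) = d$ for every $n$; the residual problem $\min_{d \in [0, D_\text{max}]} d\big(\sum_n Q_n(t) - V\big)$ then requires $\sum_n Q_n(t) \leq V$ for $D_n(t) > 0$, which certainly implies $Q_n(t) \leq V$. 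In every case, $D_n(t) > 0 \Rightarrow Q_n(t) \leq V$. Starting from $Q_n(0) = 0 \leq V + D_\text{max}$ and using \eqref{eq:QueueEvolve}, if $Q_n(t) \leq V + D_\text{max}$ then either $Q_n(t) > V$ and $D_n(t) = 0$, giving $Q_n(t+1) \leq Q_n(t) \leq V + D_\text{max}$, or $Q_n(t) \leq V$ and $Q_n(t+1) \leq Q_n(t) + D_n(t) \leq V + D_\text{max}$, closing the induction.

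For \eqref{eq:Optimality}, I would invoke the standard existence result from \cite[Chapter 4]{Neely}: for every $\epsilon > 0$ there is a stationary randomized policy $\pi_\epsilon$ whose decisions depend only on the current channel realizations (not on $\bm{Q}(t)$), for which $\operatorname{E}\{R_n(t)\} \geq \operatorname{E}\{D_n(t)\}$ for all $n$ and $\operatorname{E}\{\mathcal{U}(t)\} \geq \mathcal{U}^\star - \epsilon$. Plugging $\pi_\epsilon$ into Lemma~\ref{lem:upperbound} and using the queue-blindness to factor the conditional expectations yields $\Delta_u(t) \leq B - V(\mathcal{U}^\star - \epsilon)$ under $\pi_\epsilon$. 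Since the proposed policy minimizes $\Delta_u(t)$ in every slot, the same bound applies under it, and hence $\Delta_p\big(L(t)\big) \leq B - V(\mathcal{U}^\star - \epsilon)$. Taking expectations, telescoping $\sum_{t=0}^{T-1}\operatorname{E}\{L(t+1) - L(t)\} = \operatorname{E}\{L(T)\} - \operatorname{E}\{L(0)\}$, discarding the nonnegative $\operatorname{E}\{L(T)\}$, dividing by $VT$, and sending $T \to \infty$ followed by $\epsilon \to 0$ produces \eqref{eq:Optimality}.

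The step I expect to be the most delicate is the common-throughput case of part~(1): the $\min$ prevents the admission subproblem from separating across $n$, so the elementary scalar reasoning that dispatches the sum and proportional cases does not apply directly. The workaround I intend to rely on is the observation that $Q_n(t) \geq 0$ makes it suboptimal to set any $D_n(t)$ strictly above the common minimum $d$, which reduces the problem to a one-dimensional optimization in $d$ and restores the same threshold structure needed for the induction.
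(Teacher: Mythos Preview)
Your proposal is correct. Part~(2) is essentially identical to the paper's argument (the paper invokes the random-only policy directly at $\mathcal{U}^\star$ rather than via an $\epsilon$-approximation, but the telescoping and limit are the same).

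For part~(1), however, your route differs from the paper's. You treat the three utilities separately, explicitly solving the admission subproblem in each case and reading off the implication $D_n(t)>0 \Rightarrow Q_n(t)\le V$ from the closed-form minimizer. The paper instead gives a single unified argument: it observes that each of $\mathcal{U}_s,\mathcal{U}_p,\mathcal{U}_c$ satisfies the one-coordinate Lipschitz bound $\mathcal{U}(\bm D^1)-\mathcal{U}(\bm D^0)\le D_l^1$ whenever $\bm D^1$ and $\bm D^0$ differ only in coordinate $l$ with $D_l^0=0$, and from this shows directly that $\Delta_u(\bm D^1)-\Delta_u(\bm D^0)\ge (Q_l-V)D_l^1\ge 0$ when $Q_l\ge V$, so any admissible minimizer must put $D_l=0$. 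Your case-by-case analysis buys you the explicit optimal admission rules (which the paper derives later anyway, in its treatment of Problem~\eqref{prob:DataAdmission}), while the paper's Lipschitz argument is shorter and immediately extends to any utility with the same coordinate-wise sublinearity, without needing to compute or even know the minimizer. Both approaches feed into the same induction you wrote out.
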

\begin{proof}
See Appendix \ref{proof:theoremOpt}.
\end{proof}
According to \eqref{eq:buffUpper} in Theorem \ref{th:optimal}, if we adopt the solution of Problem \eqref{prob:upperBound} in each time slot, the level of the stored bits in the buffer will not exceed $ V + D_\text{max}$. Hence the buffers remain stable. Moreover, the performance bounds in \eqref{eq:buffUpper} and  \eqref{eq:Optimality} introduce  a tradeoff between the optimality gap of the utility and the size of the buffers. According to this tradeoff, while the utility optimality gap is within $\mathcal{O}(\frac{1}{V})$ the buffer size increases linearly with $V$.

We propose a solution for Problem \eqref{prob:upperBound} as follows. Considering  the $\Delta_u(t)$ function in \eqref{eq:upperbound}, Problem \eqref{prob:upperBound} can be separated into two independent problems. The first expectation in \eqref{eq:upperbound} is a function of $\bm{f}(t), \bm{G}(t)$ and 
$\bm{\alpha}(t)$, while the second and the third expectations are only functions of  $\bm{D}(t)$. Accordingly, we reformulate Problem  \eqref{prob:upperBound} into two sub-problems, including the link scheduling problem 
\begin{maxi!}
{\bm{f}(t), \bm{G}(t),\bm{\alpha}(t)}{ \sum_{n\in \mathcal{N}} Q_n(t) R_n(t)\label{prob:LinkScheduleObj}}{\label{prob:LinkSchedule}}{}
\addConstraint{\eqref{eq:powerConstraint}, \eqref{eq:alphaConstraint},}
\end{maxi!}
and the data admission problem
\begin{mini!} 
{\bm{D}(t)}{\sum_{n \in \mathcal{N}}Q_n(t) D_n(t) -V \mathcal{U}(t) }{\label{prob:DataAdmission}}{}
\addConstraint{\eqref{eq:DConstraint}.}
\end{mini!}
Note that we have removed the expectations  in Problems \eqref{prob:LinkSchedule} and \eqref{prob:DataAdmission} and we opportunistically minimize the expectations for realizations of the channel state. The link scheduling problem in  \eqref{prob:LinkSchedule} is a non-convex problem because of the product and ratio terms in  $R_n(t)$, i.e., the rate terms \eqref{eq:Rn}, and it does not have a trivial solution. However, the data admission problem in  \eqref{prob:DataAdmission} can be solved in closed-form for different utility functions in \eqref{eq:sumThDef}, \eqref{eq:propThDef} and \eqref{eq:comThDef}. 
\subsection{Data Link Scheduling Problem}
Problem \eqref{prob:LinkSchedule} includes maximizing a weighted sum of the BNs' data transmission rates. The objective function in \eqref{prob:LinkScheduleObj} contains multiple fractional terms, i.e., SINRs, which makes the problem NP-hard \cite{NP-hard}. However, we propose an iterative algorithm  which successively optimizes the variables.  To this end, we first find the optimal value for each  control parameter, $\bm{f}(t), \bm{G}(t)$ and $\bm{\alpha}(t)$, while the other two parameters are fixed. Finding the optimal value of the receive beamforming vector $\bm{g}_n(t),\;\forall n,$ while the transmit beamforming vector $\bm{f}(t)$ and the reflection coefficients $\bm{\alpha}(t)$ are fixed, is straightforward. This is because   the receive beamforming vector for $\text{BN}_n$, $\bm{g}_n(t)$, can be optimized independently 
through maximizing the SINR of $\text{BN}_n$, $\mathcal{S}_n(t)$. Moreover, $\mathcal{S}_n(t)$ can be formulated  as a generalized Rayleigh quotient, that is,
\begin{align}
\mathcal{S}_n(t) =  \frac{ \bm{g}_n^H(t)\bm{\zeta}_n(t)\bm{\zeta}^H_n(t)\bm{g}_n(t) }{\bm{g}_n^H(t)\bigg(\sigma^2_w\bm{I}_M + \sum_{\tilde{n}\in \mathcal{N}/n} \bm{\zeta}_{\tilde{n}}(t)\bm{\zeta}^H_{\tilde{n}}(t) \bigg)\bm{g}_n(t)},
\end{align}
where
\begin{align}
\bm{\zeta}_n(t)  = \alpha_n(t)\bm{h}_n(t)\bm{h}_n^T(t)\bm{f}(t).
\end{align}
Writing the stationarity Karush-Kuhn-Tucker conditions, \cite[Lemma 3.14]{Bjornson2013} shows that the generalized Rayleigh quotient is maximized by
\begin{align}\label{eq:optg}
\bm{g}_n^\star(t) = \frac{\bigg( \bm{I}_M + {1\over \sigma_w^2} \sum_{n \in \mathcal{N}} \bm{\zeta}_n(t)\bm{\zeta}^H_n(t)  \bigg)^{-1} \bm{\zeta}_n(t)}{\bigg\lVert  \bigg( \bm{I}_M + {1\over \sigma_w^2} \sum_{n \in \mathcal{N}} \bm{\zeta}_n(t)\bm{\zeta}^H_n(t)  \bigg)^{-1} \bm{\zeta}_n(t) \bigg\rVert }.
\end{align}

However, finding the closed-form optimal values of $\bm{f}(t)$ and $\bm{\alpha}(t)$ is more difficult, since they are coupled through the SINR terms of all BNs. We use the Lagrangian dual transform \cite{Frac2} and the quadratic transform \cite{Frac1} to reformulate Problem \eqref{prob:LinkSchedule} and facilitate finding the closed-form optimal values of $\bm{f}(t)$ and $\bm{\alpha}(t)$.  Lagrangian dual transform, introduced in Lemma \ref{prop:Lagrange},  converts the link scheduling problem to a problem of maximizing  the sum of ratios.
\begin{lemma}\label{prop:Lagrange}
The link scheduling problem \eqref{prob:LinkSchedule} is equivalent to 
\begin{maxi!}
{\bm{f}(t),\bm{G}(t), \bm{\alpha}(t), \bm{\gamma}}{\tilde{R}(t, \bm{\gamma}) }{\label{prob:lagrange}}{}
\addConstraint{\eqref{eq:powerConstraint}, \eqref{eq:alphaConstraint},}
\end{maxi!}
in the sense that Problem \eqref{prob:lagrange} leads to the same solution and maximum value as Problem \eqref{prob:LinkSchedule}. Here, $\bm{\gamma} = (\gamma_1, \ldots, \gamma_N)^T \in \mathcal{R}^N$ is an auxiliary variable and the objective function $\tilde{R}(t, \bm{\gamma})$ is defined as
\begin{align}\label{eq:Rgamma}
\tilde{R}(t, \bm{\gamma}) \triangleq \sum_{n  \in \mathcal{N}} Q_n(t)\log(1+\gamma_n)-\sum_{n \in \mathcal{N}}Q_n(t)\gamma_n +\sum_{n  \in \mathcal{N}}\frac{Q_n(t)(1+\gamma_n)|\alpha_n(t) \bm{\beta}_{n,n}^T(t) \bm{f}(t)|^2}{\sigma^2_w+\sum_{\tilde{n} \in \mathcal{N}} |\alpha_n(t)\bm{\beta}^T_{n,\tilde{n}}(t)\bm{f}(t)|^2},
\end{align}
with $\bm{\beta}^T_{n,\tilde{n}}(t)= \bm{g}_n^H(t)\bm{h}_{\tilde{n}}(t)\bm{h}_{\tilde{n}}^T(t)$.
\end{lemma}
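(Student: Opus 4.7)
The plan is to treat the added variable $\bm{\gamma}$ as an inner auxiliary variable and show that the joint maximization in \eqref{prob:lagrange} collapses back to \eqref{prob:LinkSchedule} once the optimal $\bm{\gamma}$ is substituted. Concretely, I would use the elementary identity
\begin{equation*}
\max_{\bm{f}, \bm{G}, \bm{\alpha}, \bm{\gamma}} \tilde{R}(t, \bm{\gamma})
= \max_{\bm{f}, \bm{G}, \bm{\alpha}} \Big[ \max_{\bm{\gamma}} \tilde{R}(t, \bm{\gamma}) \Big],
\end{equation*}
and show that the inner maximization over $\bm{\gamma}$ returns exactly the weighted sum rate $\sum_n Q_n(t) \log\bigl(1+\mathcal{S}_n(t)\bigr)$ appearing (up to the constant $W$, which does not affect the optimizer) in the objective \eqref{prob:LinkScheduleObj}. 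Because the two objectives agree pointwise over the feasible set of $(\bm{f}, \bm{G}, \bm{\alpha})$, the two problems share the same optimal value and the same optimal link-control variables.

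The bulk of the work is the inner maximization. I would observe that $\tilde{R}(t, \bm{\gamma})$ is separable in the $\gamma_n$'s for fixed $\bm{f}, \bm{G}, \bm{\alpha}$, and each summand $Q_n(t)\log(1+\gamma_n) - Q_n(t)\gamma_n + c_n(1+\gamma_n)$ is strictly concave in $\gamma_n$ on $(-1, \infty)$ (where $c_n \ge 0$ denotes the fraction multiplying $(1+\gamma_n)$ in \eqref{eq:Rgamma}). Setting the partial derivative with respect to $\gamma_n$ to zero gives
\begin{equation*}
\gamma_n^{\star}(t) = \frac{|\alpha_n(t)\bm{\beta}_{n,n}^T(t)\bm{f}(t)|^2}{\sigma_w^2 + \sum_{\tilde{n}\in\mathcal{N}/n} |\alpha_{\tilde{n}}(t)\bm{\beta}_{n,\tilde{n}}^T(t)\bm{f}(t)|^2} = \mathcal{S}_n(t),
\end{equation*}
which is the unique maximizer by strict concavity. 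Plugging this $\gamma_n^{\star}(t)$ back into $\tilde{R}(t, \bm{\gamma})$ and using the algebraic identity $(1+\mathcal{S}_n)\cdot \tfrac{\mathcal{S}_n}{1+\mathcal{S}_n} = \mathcal{S}_n$ collapses the last two terms and leaves $\sum_n Q_n(t)\log\bigl(1+\mathcal{S}_n(t)\bigr)$, as required.

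Finally, I would assemble the equivalence statement: since $\max_{\bm{\gamma}} \tilde{R}(t, \bm{\gamma})$ equals the original objective of \eqref{prob:LinkSchedule} at every feasible $(\bm{f}, \bm{G}, \bm{\alpha})$, the outer maximization over the link-control variables produces the same optimal value and the same optimal $(\bm{f}^{\star}(t), \bm{G}^{\star}(t), \bm{\alpha}^{\star}(t))$ in both problems; the corresponding $\bm{\gamma}^{\star}$ is recovered from the closed form above. I expect the main delicate point to be verifying that the critical point from the first-order condition is truly a global maximum (hence the emphasis on strict concavity of the $\gamma_n$-subproblems and the fact that $c_n \ge 0$), and, as a bookkeeping matter, handling the constant $W$ that is absorbed into a positive scaling without affecting the optimizer. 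Everything else reduces to substitution and the standard swap of sequential maxima.
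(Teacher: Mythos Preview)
Your proposal is correct and follows essentially the same approach as the paper's proof: establish concavity of $\tilde{R}(t,\bm{\gamma})$ in $\bm{\gamma}$, solve the first-order condition to obtain $\gamma_n^\star=\mathcal{S}_n(t)$, and substitute back to recover the weighted sum-rate objective, thereby proving equivalence. Your write-up is somewhat more explicit about separability, strict concavity, the swap of sequential maxima, and the harmless factor $W$, but the underlying argument is identical.
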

\begin{proof}
$\tilde{R}(t, \bm{\gamma})$ is a concave and differentiable function. Hence, the stationary point 
\begin{align}\label{eq:optGamma}
\gamma_n^\star = \frac{|\alpha_n(t) \bm{\beta}_{n,n}^T(t) \bm{f}(t)|^2}{\sigma^2_w+\sum_{\tilde{n}\in \mathcal{N}/n} |\alpha_n(t)\bm{\beta}^T_{n,\tilde{n}}(t)\bm{f}(t)|^2}
\end{align}
is the optimal solution which maximizes $\tilde{R}(t, \bm{\gamma})$. Substituting $\gamma_n^\star,\;\forall n$, in \eqref{eq:Rgamma} recovers $\sum_{n\in \mathcal{N}}Q_n(t)R(t)$ which establishes  the equivalence. 
\end{proof}
In the transformed objective function \eqref{eq:Rgamma} there is no logarithm function, but the last term in  \eqref{eq:Rgamma}  is still in fractional form.  We use the quadratic transform to convert the fractions  to an equivalent summation without fractions. 
\begin{lemma}
The optimization problem \eqref{prob:lagrange} is equivalent to
\begin{maxi!}
{\bm{f}(t),\bm{G}(t), \bm{\alpha}(t), \bm{\gamma}, \bm{y}}{\vardbtilde{R}(t, \bm{\gamma}, \bm{y}) }{\label{prob:y}}{}
\addConstraint{\eqref{eq:powerConstraint}, \eqref{eq:alphaConstraint},}
\end{maxi!}
in the sense that problem \eqref{prob:y} leads to the same solution and maximum value as problem \eqref{prob:lagrange}. Here, $\bm{y} = (y_1,\ldots,y_n)^T\in \mathbb{C}^N$ is an auxiliary variable and the objective function $\vardbtilde{R}(t, \bm{\gamma},\bm{y})$ is defined as
\begin{align}\label{eq:Ry}
\begin{split}
\vardbtilde{R}(t,\bm{\gamma}, \bm{y}) \triangleq \sum_{n\in \mathcal{N} } &Q_n(t)\log(1+\gamma_n)-\sum_{n\in \mathcal{N}}Q_n(t)\gamma_n +\\
&\sum_{n\in \mathcal{N}} 2\sqrt{Q_n(t)(1+\gamma_n)} \operatorname{Re}\bigg\{y_n\alpha_n(t)\bm{\beta}_{n,n}^T(t)\bm{f}(t)\bigg\}-\\
&\sum_{n\in \mathcal{N}}|y_n|^2\bigg(\sigma^2_w+\sum_{\tilde{n}\in \mathcal{N}}|\alpha_{\tilde{n}}(t)\bm{\beta}_{n,\tilde{n}}^T(t) \bm{f}(t)|^2\bigg)
\end{split}
\end{align}
\end{lemma}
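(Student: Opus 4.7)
The plan is to apply the quadratic transform of \cite{Frac1} to the third (fractional) term of $\tilde{R}(t,\bm{\gamma})$ in \eqref{eq:Rgamma}, since the first two terms are already free of fractions and depend only on $\bm{\gamma}$ and $\bm{Q}(t)$, so they carry over unchanged. Concretely, for each $n$ I would identify the numerator of the $n$-th ratio as $|A_n|^2$ with
\begin{align*}
A_n \triangleq \sqrt{Q_n(t)(1+\gamma_n)}\,\alpha_n(t)\,\bm{\beta}_{n,n}^T(t)\,\bm{f}(t),
\end{align*}
and the denominator as
\begin{align*}
B_n \triangleq \sigma_w^2 + \sum_{\tilde{n}\in\mathcal{N}} |\alpha_{\tilde{n}}(t)\bm{\beta}_{n,\tilde{n}}^T(t)\bm{f}(t)|^2 > 0.
\end{align*}

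Next I would recall the scalar identity underlying the quadratic transform: for any $A \in \mathbb{C}$ and $B>0$,
\begin{align*}
\frac{|A|^2}{B} = \max_{y\in\mathbb{C}} \Big(2\operatorname{Re}\{y A\} - |y|^2 B\Big),
\end{align*}
with the unique maximizer $y^\star = A^\ast/B$. This is a straightforward completion-of-squares (or equivalently a first-order stationarity) argument on the concave quadratic in $y$. Applying this identity term by term to the third term of $\tilde R(t,\bm\gamma)$ reveals precisely the two $y_n$-dependent summations in \eqref{eq:Ry}, yielding
\begin{align*}
\sum_{n\in\mathcal{N}}\frac{|A_n|^2}{B_n} = \max_{\bm y\in\mathbb{C}^N}\sum_{n\in\mathcal{N}}\Big(2\operatorname{Re}\{y_n A_n\} - |y_n|^2 B_n\Big),
\end{align*}
which after substituting the explicit $A_n,B_n$ gives exactly \eqref{eq:Ry}.

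To conclude equivalence of \eqref{prob:lagrange} and \eqref{prob:y}, I would argue that maximizing $\vardbtilde{R}(t,\bm\gamma,\bm y)$ jointly over $\bm y$ and the remaining variables is the same as first performing the inner maximization over $\bm y$, which by the identity above returns $\tilde R(t,\bm\gamma)$, and then maximizing over $(\bm f(t),\bm G(t),\bm\alpha(t),\bm\gamma)$ subject to \eqref{eq:powerConstraint} and \eqref{eq:alphaConstraint}. Since the introduction of $\bm y$ does not touch the feasible set of the original variables and the inner maximum is attained, the optimal objective values agree and any optimizer of \eqref{prob:lagrange} extends to an optimizer of \eqref{prob:y} by setting $y_n^\star = A_n^\ast/B_n$ with $A_n,B_n$ as above, and conversely any optimizer of \eqref{prob:y} projects back by discarding $\bm y$.

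I do not anticipate a serious obstacle here: the content is essentially a direct invocation of the scalar quadratic-transform identity, and the only mild care needed is to (i) absorb the positive weights $Q_n(t)(1+\gamma_n)$ into $|A_n|^2$ so that the transform applies verbatim with a single $y_n$ per ratio, and (ii) observe that $B_n>0$ holds always because of the $\sigma_w^2$ term, which legitimizes the division. No additional concavity, smoothness, or monotonicity hypothesis on the other variables is required for the equivalence itself, although those properties are what make the reformulation useful in the subsequent iterative algorithm.
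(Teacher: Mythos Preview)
Your proposal is correct and follows essentially the same route as the paper: both arguments reduce to optimizing the concave quadratic in each $y_n$ (you via the completion-of-squares identity $|A|^2/B=\max_{y}\{2\operatorname{Re}\{yA\}-|y|^2B\}$, the paper via the stationarity condition $\partial\vardbtilde{R}/\partial y_n=0$), obtain the same maximizer $y_n^\star=A_n^\ast/B_n$, and substitute back to recover $\tilde{R}(t,\bm{\gamma})$. Your write-up is slightly more explicit about why the inner maximum is attained ($B_n\geq\sigma_w^2>0$) and why the feasible set in the remaining variables is untouched, which is a welcome clarification but not a different method.
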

\begin{proof}
Taking complex derivative $\partial \vardbtilde{R} / \partial y_n$ and solving $\partial \vardbtilde{R}/\partial y_n = 0$, the optimal values of $y_n$ are obtained by solving 
\begin{align}
\sqrt{Q_n(t)(1+\gamma_n)}   \alpha_n(t) \bm{\beta}_{n,n}^H(t) \bm{f}^\ast(t)-y_n \left( \sigma^2_w+\sum_{\tilde{n} \in \mathcal{N}}^N |\alpha_n(t)\bm{\beta}^T_{n,\tilde{n}}(t)\bm{f}(t)|^2 \right) =0,
\end{align}
which leads to
\begin{align}\label{eq:optY}
y_n^\star = \frac{\sqrt{Q_n(t)(1+\gamma_n)}   \alpha_n(t) \bm{\beta}_{n,n}^H(t) \bm{f}^\ast(t)}{\sigma^2_w+\sum_{\tilde{n} \in \mathcal{N}}^N |\alpha_n(t)\bm{\beta}^T_{n,\tilde{n}}(t)\bm{f}(t)|^2}.
\end{align}
Substituting \eqref{eq:optY} in \eqref{eq:Ry} recovers the $\tilde{R}(t, \bm{\gamma})$. Hence, the equivalence is concluded.
\end{proof}
 The reformulated Problem \eqref{prob:y} enables us to find the closed-form optimal value of the transmit beamforming vector, $\bm{f}(t)$, or the reflection coefficients, $\bm{\alpha}(t)$, when the  other one  is fixed. 
Specifically, considering the power constraint \eqref{eq:powerConstraint}, we find the optimal   $\bm{f}(t)$,
 by introducing the dual variable $\eta \geq 0$  and solving $\partial \vardbtilde{R}/ \partial \bm{f}+ \eta\partial\big(\lVert\bm{f}(t)\lVert^2 - P\big)/\partial \bm{f}= 0$. Accordingly, we obtain
\begin{align}\label{eq:optf}
\bm{f}^\star(t) = \bigg(\sum_{n\in \mathcal{N}}|y_n|^2 \sum_{\tilde{n} \in \mathcal{N}} |\alpha_{\tilde{n}}(t)|^2\beta^\ast_{n,\tilde{n}}(t)\beta_{n,\tilde{n}}^T(t) + \eta \bm{I}\bigg)^{-1} \sum_{n \in \mathcal{N}}\bigg( \sqrt{Q_n(t)(1+\gamma_n)}\beta_{n,n}^\ast(t) \alpha_n(t) y_n^\ast \bigg),
\end{align}
where $\eta$ satisfies
\begin{align}\label{eq:eta}
\eta = \min \Big\{ \eta \geq 0 : \lVert\bm{f}^\star(t)\lVert^2 \leq P  \Big\}.
\end{align}
The dual variable $\eta$ in \eqref{eq:eta} is determined by, e.g., bisection search.

Likewise, solving $\partial \vardbtilde{R}/\partial \alpha_n = 0$,  we obtain
\begin{align}\label{eq:optAlpha}
\alpha^\star_n(t) = \begin{cases}
0&\alpha^\text{st}_n(t) \leq0\\
\alpha^\text{st}_n(t) &0\leq\alpha^\text{st}_n(t)\leq\alpha_\text{max}\\
\alpha_\text{max}&\alpha_\text{max}\leq\alpha^\text{st}_n(t),
\end{cases}
\end{align}
where,
\begin{align}
\alpha_n^\text{st}(t) = \frac{\sum_{n\in\mathcal{N}} \sqrt{Q_n(t)(1+\gamma_n)} \mathbb{R}\Big\{y_n\bm{\beta}_{n,n}^T(t)\bm{f}(t)\Big\} }{ \sum_{\tilde{n} \in \mathcal{N}} |y_{\tilde{n}}|^2|\bm{\beta}_{\tilde{n},n}^T(t) \bm{f}(t)|^2}.
\end{align}
Having a closed-form solution for each variable, we propose an iterative algorithm, for updating $\bm{f}(t), \bm{g}_n(t)$ and $\bm{\alpha}(t)$. The Link scheduling algorithm is summarized in Algorithm \ref{alg:ED}. Having access to $Q_n(t)$ and $\bm{h}_n(t)\bm{h}_n^T$  the Reader runs  Algorithm \ref{alg:ED} at the beginning of each time slot. In Algorithm \ref{alg:ED}, the transmit beamforming vector is initialized by 
\begin{align}\label{eq:fInit}
\bm{f}(t) = \frac{ \sum_{n \in \mathcal{N} }Q_n(t) \bm{h}^\ast_n(t)}{ \lVert \sum_{n \in \mathcal{N} }Q_n(t) \bm{h}^\ast_n(t) \lVert} \sqrt{P}.
\end{align}
This  initial point for $\bm{f}(t)$ is intuitive, since in a network with a single BN the maximum transmission ratio (MRT) beamforming, $\bm{f}(t) = \bm{h}^\ast(t) / \lVert \bm{h}^\ast(t)\lVert$,  is optimal \cite{Bjornson2014}. Moreover,  the weights $Q_n(t)$ in \eqref{eq:fInit} are motivated by the fact that the nodes with more congested buffers need more power to achieve higher data transmission rate. The iterations in Algorithm \ref{alg:ED} are terminated if the improvement of the objective function \eqref{prob:LinkScheduleObj} is below a threshold, determined by a convergence threshold parameter $\epsilon$, or if the number of  iterations exceed some pre-defined value $\text{it}_\text{max}$. Theorem \ref{prop:conv} establishes the convergence of Algorithm \ref{alg:ED}.
\begin{theorem}\label{prop:conv}
Algorithm \ref{alg:ED} is guaranteed to converge. Moreover, the objective function \eqref{prob:LinkScheduleObj} is non-decreasing in each iteration. 
\end{theorem}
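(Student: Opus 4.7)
The plan is to recognize Algorithm~\ref{alg:ED} as a block coordinate ascent (BCA) procedure applied to the transformed objective $\vardbtilde{R}(t,\bm{\gamma},\bm{y})$ in \eqref{eq:Ry}, where the blocks are $\bm{\gamma}$, $\bm{y}$, $\bm{G}(t)$, $\bm{\alpha}(t)$ and $\bm{f}(t)$. The key observation is that every closed-form update used in the algorithm, namely \eqref{eq:optg}, \eqref{eq:optGamma}, \eqref{eq:optY}, \eqref{eq:optf} and \eqref{eq:optAlpha}, was derived by setting the corresponding partial derivative of $\vardbtilde{R}$ (or, for $\bm{g}_n$, the associated generalized Rayleigh quotient) to zero, and each of these stationary points is the unique maximizer of $\vardbtilde{R}$ with respect to the updated block when all other blocks are held fixed. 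Hence each sub-update can only increase, or leave unchanged, the value of $\vardbtilde{R}$.

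With this monotonicity in place, I would argue non-decreasingness of the original objective \eqref{prob:LinkScheduleObj} as follows. Let $i$ index the iteration and let $(\bm{f}^{(i)},\bm{G}^{(i)},\bm{\alpha}^{(i)},\bm{\gamma}^{(i)},\bm{y}^{(i)})$ denote the iterates. By the argument above,
\begin{align}
\vardbtilde{R}\big(t,\bm{\gamma}^{(i+1)},\bm{y}^{(i+1)}\big)\Big|_{(\bm{f}^{(i+1)},\bm{G}^{(i+1)},\bm{\alpha}^{(i+1)})} \;\geq\; \vardbtilde{R}\big(t,\bm{\gamma}^{(i)},\bm{y}^{(i)}\big)\Big|_{(\bm{f}^{(i)},\bm{G}^{(i)},\bm{\alpha}^{(i)})}.
\end{align}
Because the $\bm{y}$ update and then the $\bm{\gamma}$ update are performed with the closed-form optimizers \eqref{eq:optY} and \eqref{eq:optGamma}, substituting them back collapses $\vardbtilde{R}$ first to $\tilde{R}$ (by the proof of the unnamed quadratic-transform lemma) and then to $\sum_{n}Q_n(t)R_n(t)$ (by the proof of Lemma~\ref{prop:Lagrange}). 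Therefore the value of $\vardbtilde{R}$ at iteration $i$, once the auxiliary blocks $\bm{\gamma}$ and $\bm{y}$ have been refreshed, equals the value of the original objective \eqref{prob:LinkScheduleObj} evaluated at $(\bm{f}^{(i)},\bm{G}^{(i)},\bm{\alpha}^{(i)})$, and the monotonicity chain just established carries over to the original objective.

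To conclude convergence, I would invoke boundedness: Lemma~\ref{lem:upperbound} already guarantees $R_n(t)\leq R_\text{max}$ for a finite constant $R_\text{max}$, and $Q_n(t)$ is a fixed parameter within a time slot, so $\sum_{n\in\mathcal{N}}Q_n(t)R_n(t)$ is bounded above by the finite quantity $R_\text{max}\sum_{n}Q_n(t)$. A sequence that is non-decreasing and bounded above converges in $\mathbb{R}$, giving convergence of the objective. This proves both claims of the theorem.

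The main subtlety I foresee is the $\bm{f}$ update, which is not an unconstrained stationary point but the solution of a constrained problem handled via the dual variable $\eta$ in \eqref{eq:eta}. I would justify that this step still returns a genuine maximizer of $\vardbtilde{R}$ subject to \eqref{eq:powerConstraint} by noting that for fixed $(\bm{G},\bm{\alpha},\bm{\gamma},\bm{y})$ the function $\vardbtilde{R}$ is concave and quadratic in $\bm{f}$, the feasible set $\{\bm{f}: \lVert\bm{f}\rVert^2\le P\}$ is convex, and Slater's condition holds; hence strong duality applies and \eqref{eq:optf}--\eqref{eq:eta} recover the unique global maximizer. Once this is in place, the BCA monotonicity argument goes through without further obstacles.
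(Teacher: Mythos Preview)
Your approach is essentially the paper's: show that each sub-update cannot decrease the (transformed) objective, then use boundedness of $\sum_n Q_n(t)R_n(t)$ to conclude convergence. However, there is one genuine gap in your chain. You assert that every update, including \eqref{eq:optg}, is the maximizer of $\vardbtilde{R}$ with the remaining blocks fixed. This is not true for the $\bm{g}_n$ update: \eqref{eq:optg} maximizes the SINR $\mathcal{S}_n$ (equivalently $R_q \triangleq \sum_n Q_n R_n$) directly, not $\vardbtilde{R}(\cdot,\bm{\gamma},\bm{y})$ for the \emph{current}, stale auxiliary variables $(\bm{\gamma},\bm{y})$. For fixed $(\bm{\gamma},\bm{y})$, $\vardbtilde{R}$ is a concave quadratic in $\bm{g}_n$ whose maximizer is in general different from the generalized Rayleigh-quotient solution. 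Hence your displayed BCA inequality on $\vardbtilde{R}$ can fail precisely at the $\bm{G}$ step, and the argument as written does not close.

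The fix is exactly what the paper does in its chain \eqref{eq:ChainConv}: after the $\bm{f}$ and $\bm{\alpha}$ updates, first bound $\vardbtilde{R}$ above by $\max_{\bm{y}}\vardbtilde{R}=\tilde{R}$ and then by $\max_{\bm{\gamma}}\tilde{R}=R_q$, thereby collapsing back to the original objective $R_q(\bm{f}^{i+1},\bm{G}^{i},\bm{\alpha}^{i+1})$ \emph{before} applying the $\bm{G}$ update. Since \eqref{eq:optg} maximizes $R_q$ in $\bm{G}$, the final inequality $R_q(\bm{f}^{i+1},\bm{G}^{i},\bm{\alpha}^{i+1})\le R_q(\bm{f}^{i+1},\bm{G}^{i+1},\bm{\alpha}^{i+1})$ then holds. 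With this reordering your argument becomes correct and coincides with the paper's proof; your treatment of the constrained $\bm{f}$ step via strong duality is fine and slightly more explicit than the paper's.
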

\begin{proof}
See Appendix \ref{proof:probConv}.
\end{proof}

The simulation results in Section \ref{sec:Numerical} show that, for a broad range of parameter settings, Algorithm \ref{alg:ED} converges with few numbers of iterations.

\begin{algorithm}[t]
\caption{ Energy and data transmission scheduling in  time slot $t$.}\label{alg:ED}
 \hspace*{\algorithmicindent} \textbf{Input:} $\bm{Q}(t), \bm{h}_n(t)\bm{h}_n^T \; \forall n, P$, convergence tolerance $ \epsilon$ and maximum iterations $\text{it}_\text{max}$.   \\
 \hspace*{\algorithmicindent} \textbf{Output:}  $\bm{f}(t), \bm{g}_n(t), \alpha_n(t) \; \forall n $.\\
\begin{algorithmic}[1]
\State $f(t) \leftarrow \frac{ \sum_{n \in \mathcal{N} }Q_n(t) \bm{h}^\ast_n(t)}{ \lVert \sum_{n \in \mathcal{N} }Q_n(t) \bm{h}^\ast_n(t) \lVert} \sqrt{P}$.  \Comment{Initialization}
\State $\alpha(t) \leftarrow  [\alpha_\text{max}, \ldots,\alpha_\text{max}]^T$.
\State Calculate    $\bm{g}_n(t), \; \forall n,$  according to   \eqref{eq:optg}.
\State $r_c \leftarrow 0$.
\State $\text{it} \leftarrow 0$
\Repeat \Comment{Main loop}
\State $r_p \leftarrow r_c$.
\State Update $\gamma_n(t),\;\forall n,$ according to \eqref{eq:optGamma}.
\State Update $y_n(t),\;\forall n,$ according to \eqref{eq:optY}.
\State Update $\bm{f}(t)$ according to \eqref{eq:optf}.
\State Update $\alpha_n(t),\;\forall n,$ according to \eqref{eq:optAlpha}.
\State Update $\bm{g}_n(t),\;\forall n,$ according to \eqref{eq:optg}.
\State Calculate $R_n(t),\;\forall n,$ according to \eqref{eq:Rn}.
\State $r_c \leftarrow \sum_{n\in \mathcal{N}} Q_n(t) R_n(t)$
\State $\text{it} \leftarrow \text{it} + 1$.
\Until{$|r_c - r_p| > \epsilon r_p \land \text{it}  \leq \text{it} _\text{max} $}
\end{algorithmic}
\end{algorithm}

\subsection{Data Admission Problem}
Considering different utility functions, we derive the corresponding admission policy through solving Problem \eqref{prob:DataAdmission}. 
\paragraph{Sum Throughput Utility}
 Problem \eqref{prob:DataAdmission} with the sum throughput  utility $\mathcal{U}(t) = U_s(\bm{D}(t))$ converts to the  linear problem
\begin{mini!} 
{\bm{D}(t)}{\sum_{n \in \mathcal{N}}(Q_n(t)-V) D_n(t) }{}{}
\addConstraint{D_n(t) < D_\text{max}, \;\;\forall n}
\end{mini!}
which is solved if 
\begin{align}\label{eq:sumThroughputPolicy}
D_n(t) = \begin{cases} D_\text{max}, & Q_n(t) \leq V\\
0, &\text{Otherwise}.
\end{cases}
\end{align}
The policy in \eqref{eq:sumThroughputPolicy} follows a greedy binary rule. Specifically, the BNs admit the maximum possible data in their buffers, whenever the stored bit level in the buffer is below $V$, while no data is admitted when the level reaches $V$.
\paragraph{Common Throughput Utility}
Adopting the common throughput  utility, $U_c(\bm{D}(t))$ in Problem \eqref{prob:DataAdmission}, we have 
\begin{mini!} 
{\bm{D}(t)}{\sum_{n \in \mathcal{N}} Q_n(t) D_n(t) - V \min_{n}\{D_n(t)\} \label{eq:objComm}}{\label{prob:dataAdmissionCommon}}{}
\addConstraint{D_n(t) \leq D_\text{max}, \;\;\forall n. \label{eq:ConstDmaxComm}}
\end{mini!}
The following Lemma establishes the structure of the solution of Problem \eqref{prob:dataAdmissionCommon}.
\begin{lemma}\label{prop:commonThStruct}
For maximizing the objective function \eqref{eq:objComm}, all BNs should admit equal amount of data. That is, 
\begin{align}\label{eq:optfeature}
D_1(t) = D_2(t) = \ldots= D_N(t) = D(t).
\end{align}
\end{lemma}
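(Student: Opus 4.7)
The plan is to prove the lemma by an exchange argument showing that any feasible $\bm{D}(t)$ can be replaced by the constant vector whose entries all equal $\min_n D_n(t)$ without increasing the objective of the minimization problem \eqref{prob:dataAdmissionCommon}. Since the buffer levels $Q_n(t)$ that appear as coefficients in \eqref{eq:objComm} are non-negative (they represent stored bits), pushing the larger coordinates down to the minimum can only decrease the linear term $\sum_n Q_n(t) D_n(t)$, while the penalty term $-V\min_n\{D_n(t)\}$ stays fixed because the minimum itself is unchanged. This yields an optimal point with the claimed equal-admission structure \eqref{eq:optfeature}.

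More concretely, I would first let $\bm{D}(t)$ be any feasible point of \eqref{prob:dataAdmissionCommon}, set $D(t)\triangleq\min_{n\in\mathcal{N}}D_n(t)$, and define $\bm{D}'(t)$ by $D'_n(t)=D(t)$ for all $n$. Feasibility of $\bm{D}'(t)$ under \eqref{eq:ConstDmaxComm} follows immediately because $D(t)\le D_n(t)\le D_\text{max}$ for every $n$, and $D'_n(t)\ge 0$ is inherited from the original non-negativity of the $D_n(t)$. Next I would compute
\begin{align*}
\sum_{n\in\mathcal{N}}Q_n(t)D'_n(t)-V\min_n\{D'_n(t)\}
&=\sum_{n\in\mathcal{N}}Q_n(t)D(t)-VD(t)\\
&\le\sum_{n\in\mathcal{N}}Q_n(t)D_n(t)-V\min_n\{D_n(t)\},
\end{align*}
where the inequality uses $Q_n(t)\ge 0$ together with $D(t)\le D_n(t)$ termwise, and the fact that $\min_n\{D'_n(t)\}=D(t)=\min_n\{D_n(t)\}$. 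Applying this to an optimal $\bm{D}^\star(t)$ of \eqref{prob:dataAdmissionCommon} produces a feasible equal-coordinate vector whose objective value is no larger, hence also optimal, establishing that an optimal solution of the claimed form exists.

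There is no real technical obstacle here; the only point that needs care is ensuring the substitution preserves both constraints simultaneously (the upper bound $D_\text{max}$ and the implicit lower bound $0$), which is automatic because we replace each coordinate by a value sandwiched between $0$ and $D_n(t)\le D_\text{max}$. The structural consequence \eqref{eq:optfeature} then reduces the vector problem \eqref{prob:dataAdmissionCommon} to a scalar problem in the single variable $D(t)$, which is the form the paper will exploit in the subsequent derivation of the closed-form common-throughput admission rule.
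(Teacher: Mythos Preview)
Your proof is correct and follows essentially the same exchange argument as the paper: replace an arbitrary feasible $\bm{D}(t)$ by the constant vector with entries $\min_n D_n(t)$, then use $Q_n(t)\ge 0$ and the invariance of the minimum to conclude the objective does not increase. Your write-up is in fact slightly more careful than the paper's, since you explicitly verify feasibility and name the non-negativity of $Q_n(t)$ as the ingredient behind the inequality.
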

\begin{proof}
Considering an arbitrary data admission vector $\bm{D}(t)$ that satisfies \eqref{eq:ConstDmaxComm}, we construct a new vector $\tilde{\bm{D}}(t)$ such that $\tilde{D}_1(t)= \tilde{D}_2(t) = \ldots = \min_{n}\{D_n(t) \}$. Then, we  have
\begin{align}\label{eq:lemFeatureIn}
\sum_{n \in \mathcal{N}} Q_n(t) \tilde{D}_n(t) - V \min_{n}\{\tilde{D}_n(t)\} \leq \sum_{n \in \mathcal{N}} Q_n(t) D_n(t) - V \min_{n}\{D_n(t)\},
\end{align}
where equality holds only if $\bm{D}(t) = \tilde{\bm{D}}(t)$. Inequality \eqref{eq:lemFeatureIn} holds since  we have  $\min_{n}\{D_n(t)\} = \min_{n}\{\tilde{D}_n(t)\}$ and $  \tilde{D}_n(t) \leq D_n(t)$. Accordingly, the optimal $\bm{D}(t)$  follows \eqref{eq:optfeature}.
\end{proof}
Using Lemma \ref{prop:commonThStruct}, after adopting $\bm{D}(t) = (D(t),\ldots, D(t))^T$, we can rewrite Problem \eqref{prob:dataAdmissionCommon} as
\begin{mini!} 
{D(t)}{D(t)\sum_{n \in \mathcal{N}} (Q_n(t)  - V )\label{eq:objComm2}}{\label{prob:dataAdmissionCommon2}}{}
\addConstraint{D(t) \leq D_\text{max}.  \label{eq:ConstDmaxComm2}}
\end{mini!}
Thus, the optimal data admission rule is
\begin{align}\label{eq:OptAdmissionComm}
D_n(t) = \begin{cases} D_\text{max}, &\sum_{n\in \mathcal{N}} Q_n(t) \leq V\\
0 &\text{Otherwise}.
\end{cases}
\end{align}
Similar to \eqref{eq:sumThroughputPolicy}, the data admission policy in \eqref{eq:OptAdmissionComm} follows a binary rule. However, the admission decision in \eqref{eq:OptAdmissionComm} is common for all BNs. Accordingly, under the common throughput utility all BNs will reach the same throughput.  

\paragraph{Proportional Throughput Utility}
Substituting the proportional  throughput utility \eqref{eq:propThDef} in Problem \eqref{prob:DataAdmission}, we have 
\begin{mini!} 
{\bm{D}(t)}{\sum_{n \in \mathcal{N}} Q_n(t) D_n(t) - V\sum_{n \in \mathcal{N}}  \log\big(1+D_n(t)\big) \label{eq:objProp}}{\label{prob:dataAdmissionProp}}{}
\addConstraint{D_n(t) \leq D_\text{max}, \;\;\forall n, \label{eq:ConstDmaxProp}}
\end{mini!}
which is a convex and differentiable function with respect to $\bm{D}(t)$. Accordingly, comparing the stationary point of \eqref{eq:objProp} and the boundary point in \eqref{eq:ConstDmaxProp}, we find the optimal admission policy
\begin{align}\label{eq:optPropAdmission}
D_n(t) = \begin{cases} D_\text{max}, &    Q_n(t) \leq \frac{V}{1+D_\text{max}}\\
0, &Q_n(t) \geq V\\
\frac{ V}{Q_n(t)} -1, &\text{Otherwise}.
\end{cases}
\end{align}
The admission policy in \eqref{eq:optPropAdmission} is proportional to the inverse of the level of the stored data in the buffer. This, intuitively, means that the BN becomes less greedy to admit new data when the buffer size increases.

\section{Simulation Results}\label{sec:Numerical}

Here, we present the simulation results, and evaluate the performance of our proposed policy. In all figures, we consider the Rician fading model, that is,
\begin{align}
\bm{h}_n(t) = \sqrt{\beta_n}\Bigg(  \sqrt{\frac{K}{K+1}} \bm{h}^d_n(t)  +   \sqrt{\frac{1}{K+1}} \bm{h}^s_n(t)  \Bigg),
\end{align}
where, $\bm{h}^d_n(t)$ and $\bm{h}^s_n(t)$ denote the deterministic and scattered components of the channel, respectively. Moreover, $K$ is the Rician $K$-factor which determines the ratio between the Rician and the scattered components, and   $\beta_n$ represents the path loss factor. Note that $K=0$ and $K\to\infty$ represent the cases with Rayleigh fading and line-of-sight channels, respectively. We consider $\beta_n = d_n^{-\rho}\left({3\times10^8\over 4\pi f}\right)^2$, where $d_n$ is the distance between $\text{BN}_n$ and the Reader, $\rho$ is the path loss exponent and $f$ is the transmit frequency.
The entries of the scattered component vector $\bm{h}^s_n(t)$ are independent and zero-mean unit variance circularly symmetric complex Gaussian (CSCG) distributed random variables. {The deterministic components, $\bm{h}^d_n(t)$, are determined  according to a half wavelength separated uniform linear array setting as modeled in \cite[Eq. 2]{Zeng2015}.}
 Unless otherwise stated, the BNs are distributed randomly with a uniform distribution in a circular area around the Reader with average distance to Reader equal to $30$ m. The general simulation parameters are as summarized in Table \ref{table:simParam}.

\begin{table}
\centering
\caption{Summary of the simulation parameters.}\label{table:simParam}
\centering
\begin{tabular}{|c|c|}
\hline
Parameter & Value\\
\hline
Number of BNs, $N$ & 5\\
\hline
Number of Reader's antennas, $M$ & 5\\
\hline
Received noise power, $\sigma_w $ & $-110$ dBm\\
\hline
Reader's maximum transmission power, $P$ & $500$ mW\\
\hline
Bandwidth, $W$ & $5$ kHz\\
\hline
Maximum admitted data in each  time slot, $D_\text{max}$ & $30$ kbits\\
\hline
Convergence threshold in Algorithm \ref{alg:ED},  $\epsilon$ & 0.01\\
\hline
 Maximum iterations in Algorithm \ref{alg:ED}, $\text{it}_\text{max}$ & 100\\
\hline 
Maximum reflection coefficient, $\alpha_\text{max}$ & 0.8 \\
\hline
Simulation time, $T$ & $10^3$\\
\hline
Rician factor, $K$ & 1\\
\hline
Path loss exponent, $\rho$ & 3\\
\hline
Transmit frequency, $f$ & $915$ MHz\\
\hline
\end{tabular}
\end{table}

Considering different parameter settings, Fig. \ref{fig:iteration} studies the convergence of Algorithm \ref{alg:ED}. Specifically, in Figures \ref{fig:iteration_avgN}  and \ref{fig:iteration_avgM}, the average number of  iterations of Algorithm \ref{alg:ED} is plotted  versus the number of BNs $N$ and the number of Reader's antennas $M$, respectively. As seen in the figures, with different parameter settings, Algorithm \ref{alg:ED} converges with few iterations. However, the average number of iterations increases slightly as the number of  BNs or the number of Reader's antennas increase. 

In Fig.  \ref{fig:iteration_max}, we study the effect of the maximum number of iterations $\text{it}_\text{max}$, on the achievable  sum throughput utility $\bar{U}_s \triangleq {1 \over T} \sum_{t= 0}^{T-1} U_s(t)$, in a BCN with $N = \{5, 10\}$ and $M = \{5, 10\}$.  According to this figure, we achieve the maximum utility with a few iterations of Algorithm \ref{alg:ED} in each time slot. Moreover,  considering only one iteration, we see in Fig.  \ref{fig:iteration_max} that we can achieve almost $95\%$ of the maximum  utility. Accordingly, we can  reduce the scheduling time through reducing the maximum number of  iterations of Algorithm \ref{alg:ED} significantly, with marginal performance degradation.

\begin{figure}
\centering
\begin{subfigure}[b]{0.31\textwidth}
\includegraphics[width =1 \textwidth]{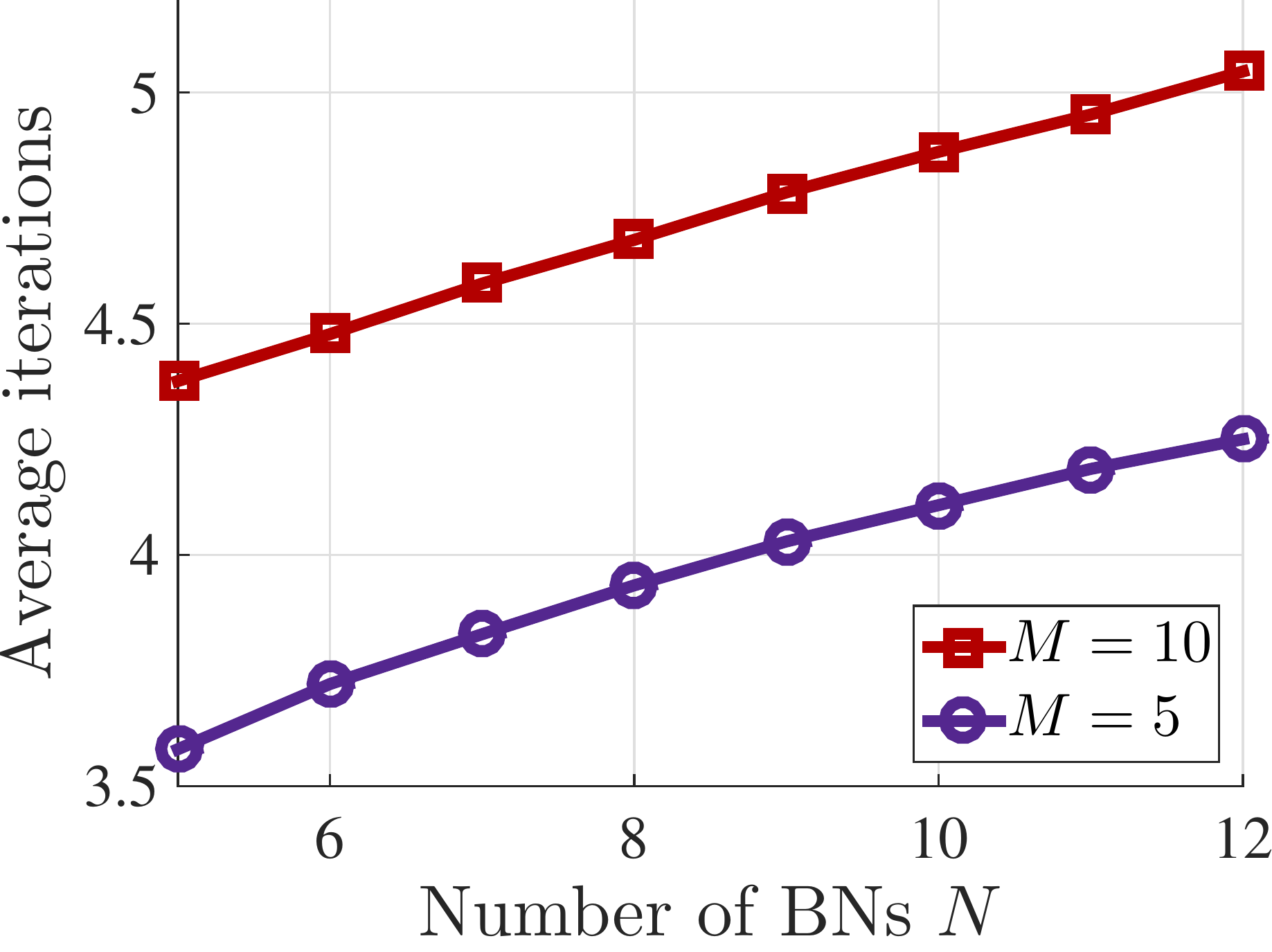}
\caption{  } \label{fig:iteration_avgN}
\end{subfigure}
~
\begin{subfigure}[b]{0.31\textwidth}
\includegraphics[width =1 \textwidth]{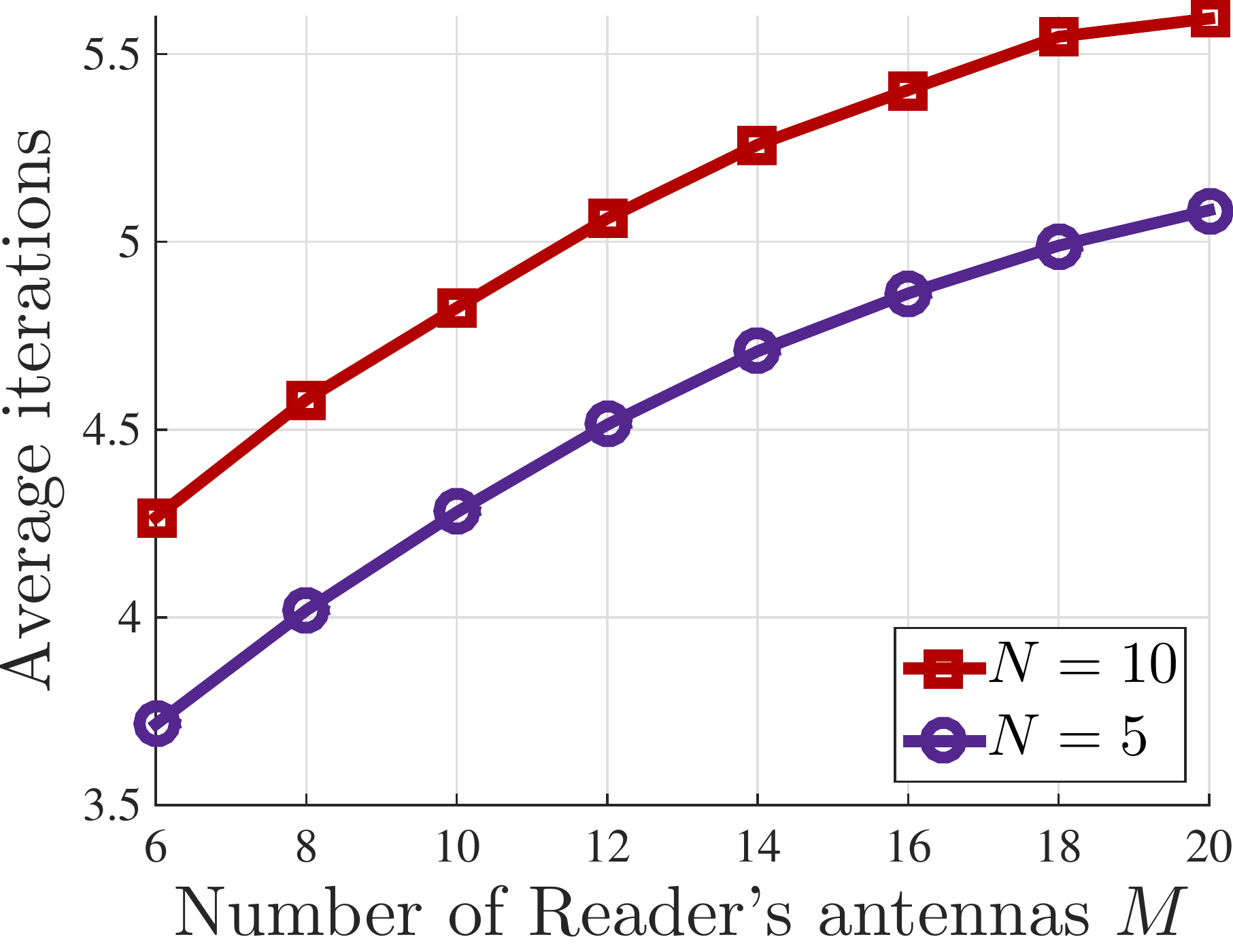}
\caption{  } \label{fig:iteration_avgM}
\end{subfigure}
~
\begin{subfigure}[b]{0.33\textwidth}
\includegraphics[width =1 \textwidth]{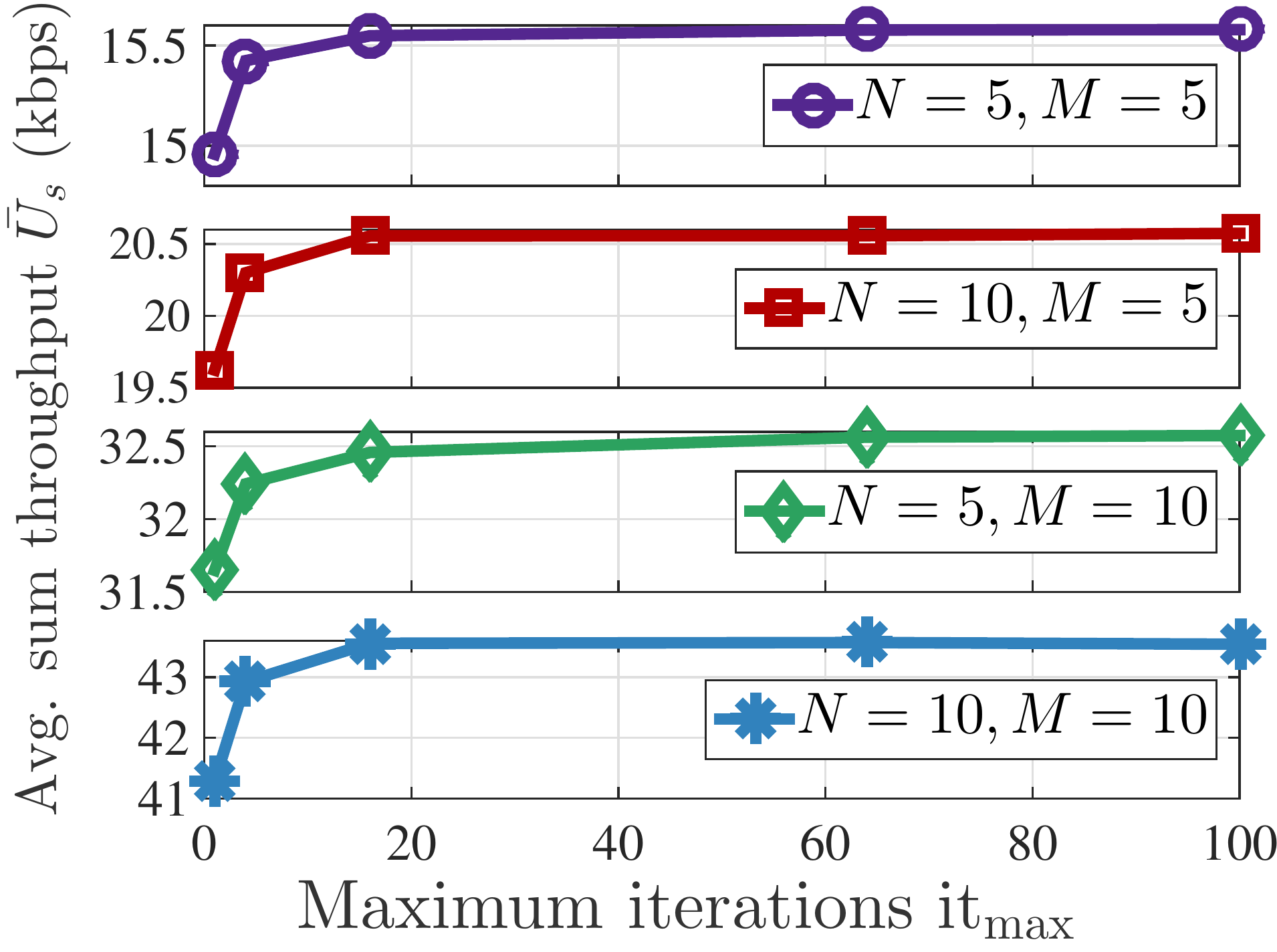}
\caption{ }\label{fig:iteration_max}
\end{subfigure}

\caption{ Convergence analysis of Algorithm \ref{alg:ED} with different parameter settings. Figures (a) and (b) show the average iterations versus $N$ and $M$, respectively. Figure (c) shows the average sum throughput utility versus the maximum number of  iterations $\text{it}_\text{max}$.  } \label{fig:iteration}
\end{figure}

Considering maximum transmission power $P = \{100, 800\}$ mW and different numbers of  Reader's antennas  $M = \{6, 8, 10\}$, Fig. \ref{fig:UvsQ} demonstrates the tradeoff between the average proportional throughput utility, $\bar{U}_p = {1\over T}\sum_{t = 0}^{T-1} U_p(t)$, and the average data level in the buffers.  Here, for given values of $P$ and $M$, the  utilities   are obtained under different values of $V$   between $10^7$ and $10^9$.  As seen  in Fig. \ref{fig:UvsQ},  the utility increases as the  buffers become more congested.  However, the utility saturates as the average data level in the buffers increases. This tradeoff between utility and the average data level of the buffers is in harmony with Theorem \ref{th:optimal}. That is, the optimality gap is inversely proportional to the maximum data level in the buffers.
Furthermore, considering different parameter settings in Fig. \ref{fig:UvsQ}, the tradeoff between utility and the data level in  buffers is almost insensitive to the number of  Reader's antennas $M$ or the transmission power $P$. That is, with different values of $P$ and $M$ the utility saturates when the data level in the buffers exceeds $50$ kbits.
 Finally, as expected, we see in Fig. \ref{fig:UvsQ} that the utility improves as    $M$ or $P$ increases. However, this improvement becomes less dominant as the utility  increases.

\begin{figure}
\centering
\includegraphics[width =0.6 \textwidth]{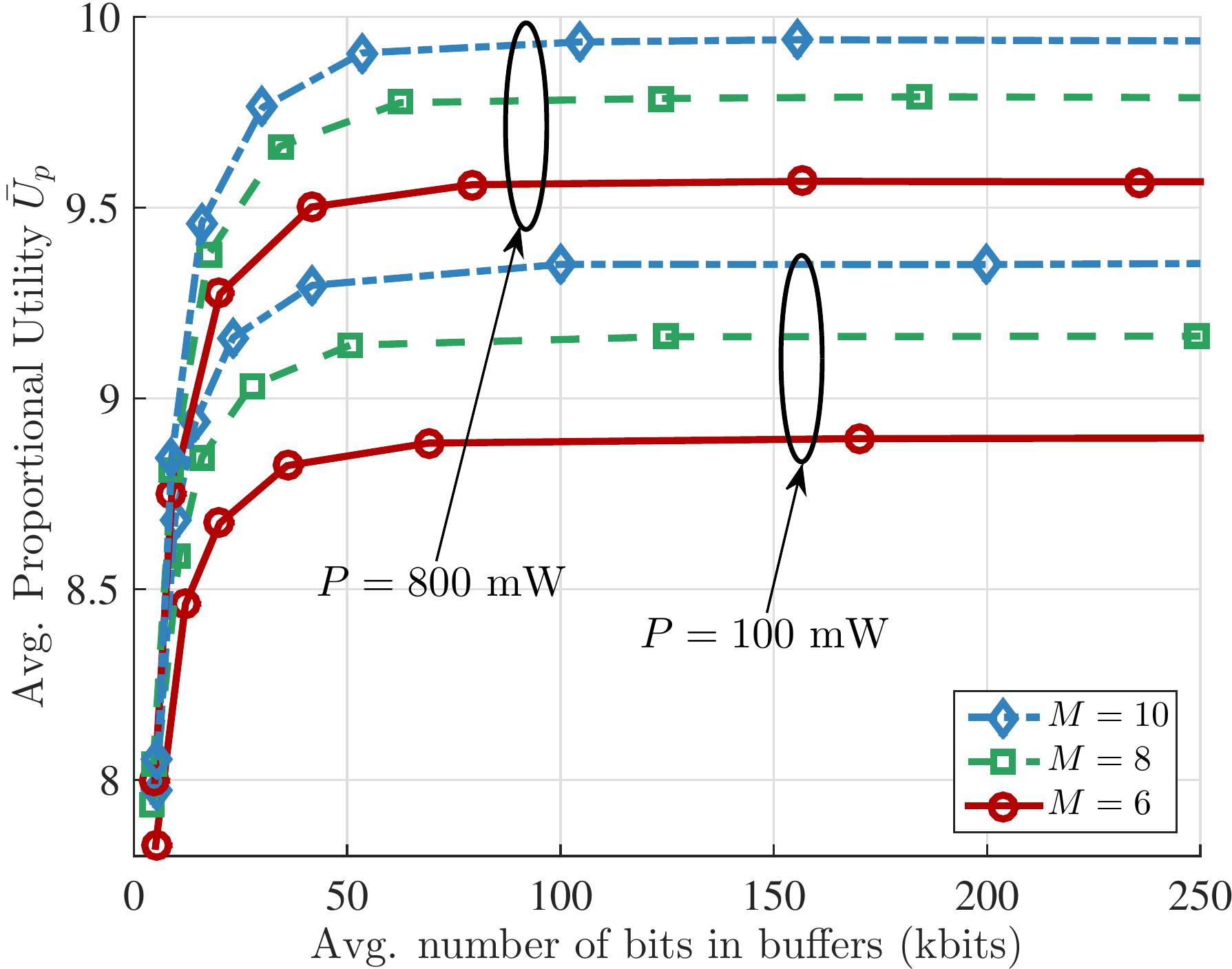}
\caption{ Average proportional throughput utility versus the average data level in the BNs' buffers for $P = \{100, 800\}$ mW and $M = \{6, 8, 10\}$.} 
\label{fig:UvsQ}
\end{figure}

Considering different values for the average distance of the BNs to the Reader, Fig. \ref{fig:commonVsMishra} shows the average common throughput utility, $\bar{U}_c = {1\over T}\sum_{t = 0}^{T-1} U_c(t)$, versus the number of  Reader's antennas $M$. Moreover, the figure compares our scheme with the proposed policy in \cite{Mishra_Min}, which maximizes the minimum achievable rate of all BNs in each time slot.  As can be seen, with the considered average BNs distances and the number of the Reader's antennas in Fig. \ref{fig:commonVsMishra}, the common throughput utility under our proposed policy improves on average $13\%$, compared to the policy in \cite{Mishra_Min}. The  reason for such improvement is adopting buffers in the BNs. The buffers allow the BNs to delay the transmission of the admitted data in the BNs. Accordingly, the link control policy in Algorithm \ref{alg:ED}  will optimally schedule data transmission for each BN in the best time slot and,  hence, optimizes the resources to maximize the average  utility. Furthermore, Fig. \ref{fig:commonVsMishra} shows the importance of using multi-antenna Readers in BCNs. According to this figure, considering different average BNs distances, the  utility improves more than $100\%$ when $M$ increases from $6$ to $20$. Also,  adopting more antennas in the  Reader, we can increase the coverage range. For example, 
consider the parameter settings of Fig. \ref{fig:commonVsMishra} and the common throughput utility of $1.2$ kbps. Then, by increasing the number of antennas from $6$ to $10$ the average supported distance increases from 24 m to 36 m. However, this relative improvement becomes less dominant as the number of antennas increases.

\begin{figure}
\centering
\includegraphics[width =0.6 \textwidth]{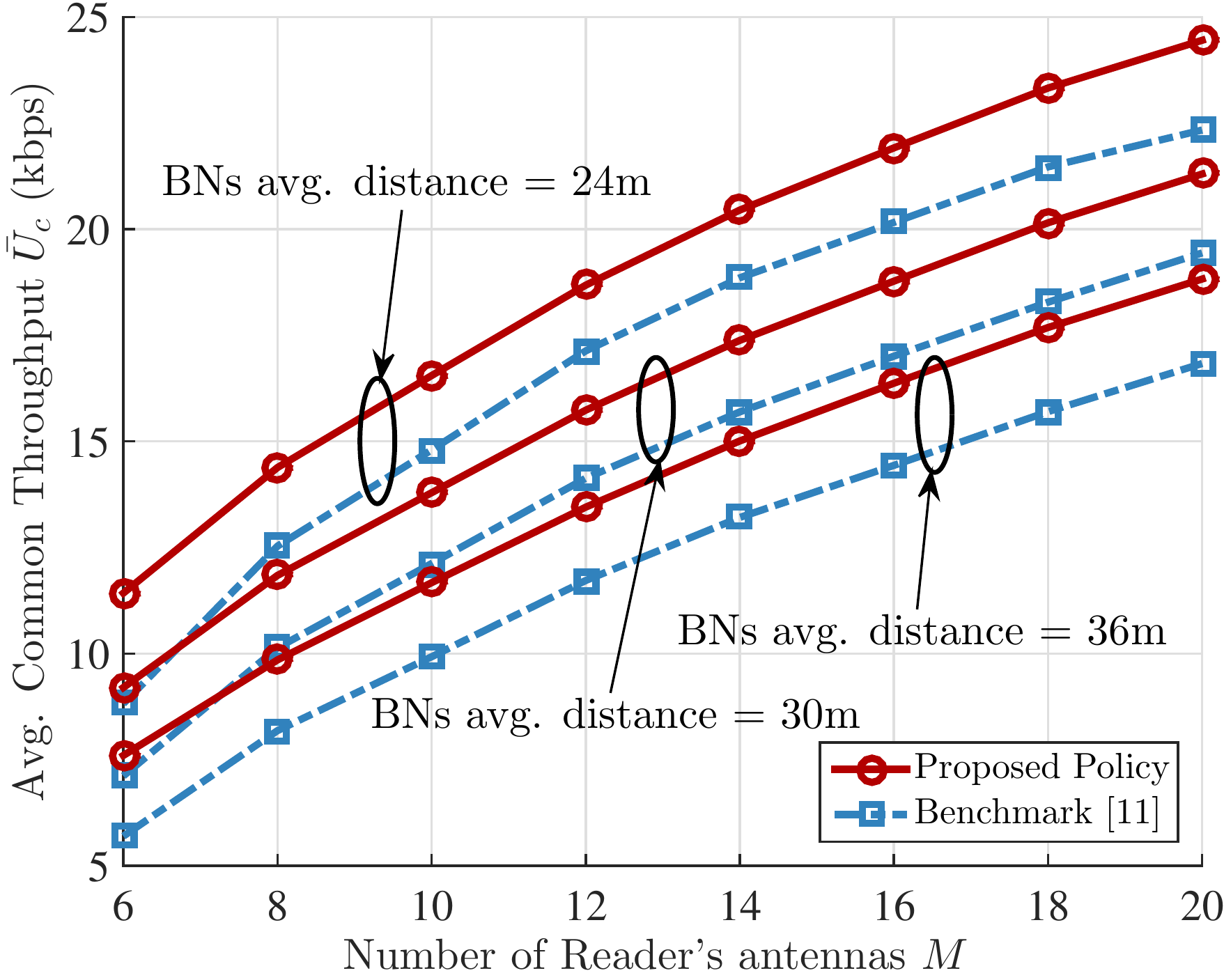}
\caption{ Common throughput utility versus the number of the Reader's antennas $M$, for the proposed algorithm and the benchmark in \cite{Mishra_Min}. BNs are distributed randomly in a circular area with average distance to the Reader  $\{24, 30, 36\}$ m and $V = 10^5$. }  \label{fig:commonVsMishra}
\end{figure}

In Fig. \ref{fig:sumVsN},  we compare the average sum throughput utility under our proposed policy with two other benchmarks. For the first benchmark, called  MRT, we adopt the transmit beamforming according to $\bm{f}(t) = \sum_{n\in \mathcal{N}} \bm{h}_n(t) / \lVert  \sum_{n\in \mathcal{N}} \bm{h}_n(t) \lVert$  and use the optimal receive beamforming in \eqref{eq:optg}. This specific choice of  $\bm{f}(t)$ is motivated by the fact that MRT beamforming is optimal for the single BN scenario \cite{Bjornson2014}. For the second benchmark, we consider the  joint transmit and receive beamforming design proposed in \cite{Mishra_Sum}. The BNs of the considered model in \cite{Mishra_Sum}  have no buffers. Hence, in  \cite{Mishra_Sum} the sum throughput utility is optimized in a time slot based framework. Considering $N \in \{5,\ldots, 12\}$ BNs distributed uniformly in an area with radius $\{50, 70\}$ m,  Fig. \ref{fig:sumVsN} demonstrates the average sum throughput utility versus the number of  BNs $N$. 

As  seen in Fig. \ref{fig:sumVsN}, the sum throughput utility increases almost linearly with the number of  BNs $N$. Moreover, according to this figure, the difference between the  utility under our proposed policy and the  policy in  \cite{Mishra_Sum} is  negligible. Also, both policies provide an average improvement of about $13\%$ and $24\%$ over the MRT  with radius equal to $70$ m and $50$ m, respectively. Accordingly, unlike the common throughput utility, with the sum throughput utility, adopting  buffers in the BNs will not increase the utility. This is because maximizing the  sum throughput utility reduces to opportunistically maximizing the sum throughput in each time slot.
Particularly, considering $D_\text{max} > R_\text{max}$, the  data admission policy  \eqref{eq:sumThroughputPolicy} implies that  during steady state the data buffers in all BNs will fluctuate near $V+D_\text{max}$. Hence, all BNs will have almost equal weights in data link scheduling problem \eqref{prob:LinkSchedule} and accordingly, our problem reduces to maximizing the sum  rate as in  \cite{Mishra_Sum}. 
\begin{figure}
\centering
\includegraphics[width =0.6 \textwidth]{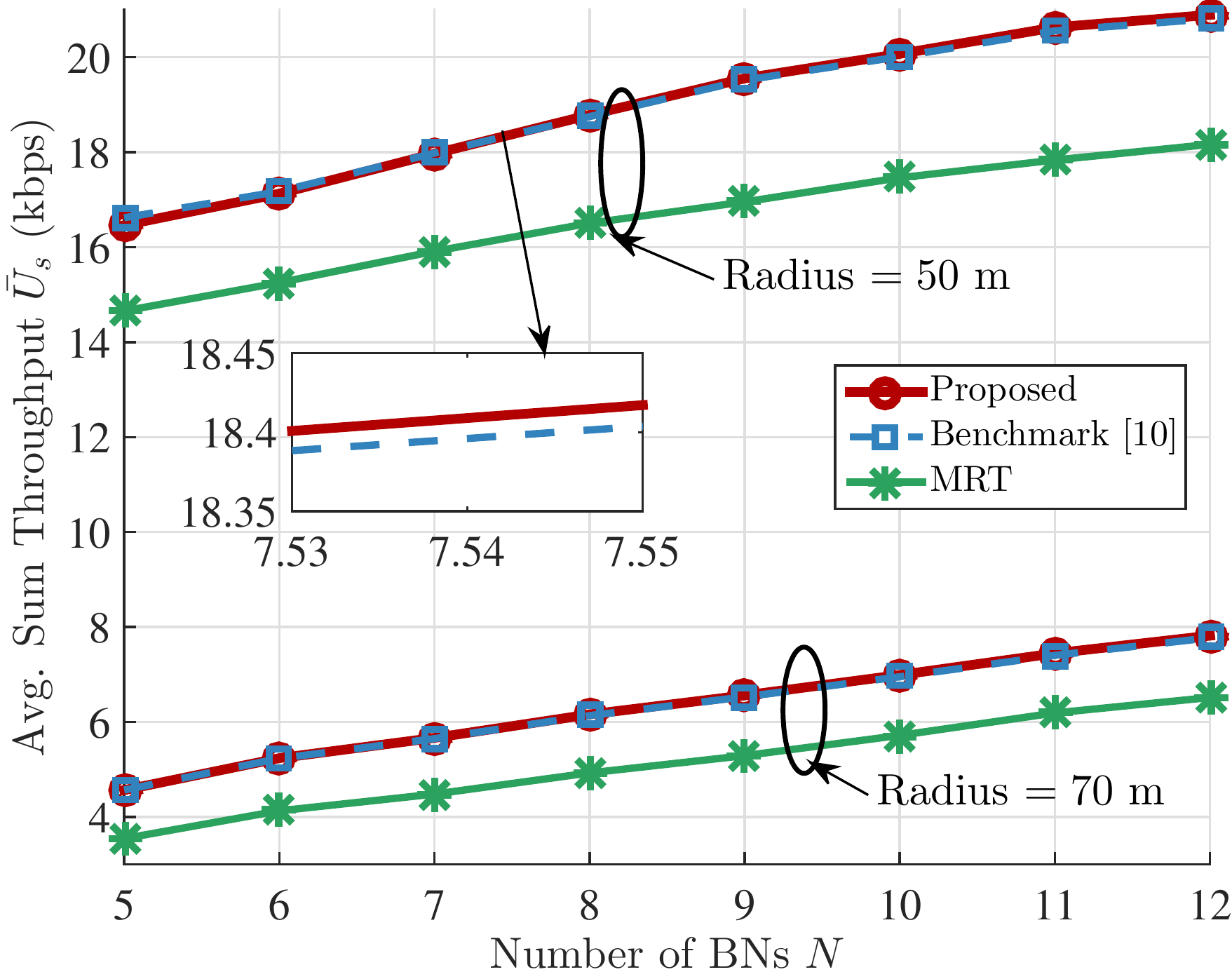}
\caption{ Sum throughput utility versus the number of the BNs $N$ for the proposed algorithm, the benchmark in \cite{Mishra_Sum} and MRT. BNs are distributed randomly in a circular area with Radius  $\{50, 70\}$ m and $V = 10^7$.} 
\label{fig:sumVsN}
\end{figure}

Considering the  utilities, $U_s(t)$, $U_p(t)$ and $U_c(t)$, Fig. \ref{fig:utilityComp} compares the average throughput and the received energy of the BNs in a BCN with 4 BNs. In this BCN, $\text{BN}_i, i \in \{1,2,3,4\}$, are located at distances 18 m, 22 m, 30 m and 34 m of the Reader, respectively.  According to Fig. \ref{fig:BNthroughput}, the sum throughput utility behaves opportunistically where, for instance, $\text{BN}_1$ achieves almost $250\%$ higher throughput compared to $\text{BN}_4$. Whereas, with the common throughput utility all BNs achieve equal throughputs at a cost of about $20\%$  sum throughput reduction compared to the other two utilities. The received energy of the BNs in Fig. \ref{fig:BNpower} is in harmony with Fig. \ref{fig:BNthroughput}. That is, with the sum and proportional throughput utilities, $\text{BN}_1$ receives the highest portion of the energy transmitted by the Reader. However, with the  common throughput utility the farther   BNs  receive more energy.  Particularly, considering the common throughput utility, $\text{BN}_4$ receives $230\%$ more energy compared to $\text{BN}_1$. While  the path loss of the link between $\text{BN}_4$ and the Reader is $(34/18)^3 \approx 6.7$ times higher than path loss of the link between $\text{BN}_1$ and the Reader.

\begin{figure}
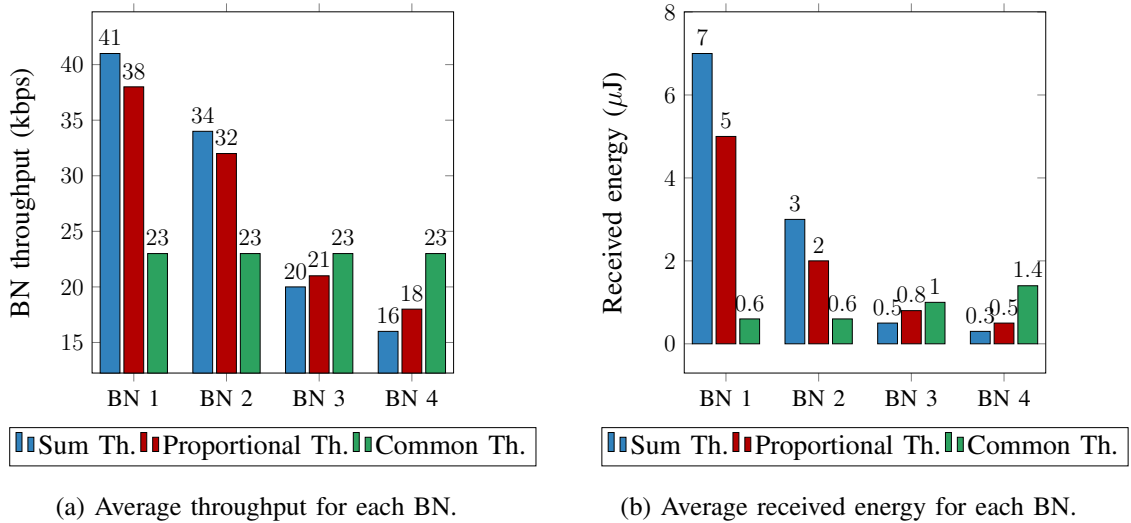

\centering
\begin{subfigure}[b]{0.45\textwidth}
\includestandalone[width = 1\textwidth]{FairnesComp}
\caption{Average throughput for each BN.}\label{fig:BNthroughput}
\end{subfigure}
~
\begin{subfigure}[b]{0.45\textwidth}
\includestandalone[width =1\textwidth]{FairnesCompEnergy}
\caption{Average received energy for each BN.}\label{fig:BNpower}
\end{subfigure}
\caption{Utility functions comparison. The distances of  the $\text{BN}_n,\; n= \{1,2,3,4\}$,  to the Reader are set to  $18$ m, $22$ m, $30$ m and $34$ m, respectively, and $V = 10^7$.  }
\label{fig:utilityComp}
\end{figure}

\subsection{Finite Blocklength Analysis of the Proposed Scheme} \label{sec:finite}
To simplify the analysis, we presented the results for the cases with sufficiently long codewords where the Shannon's capacity, i.e., \eqref{eq:Rn}, gives an appropriate approximation of the achievable rates. However, depending on the Reader energy budget and the number of nodes, the BCN may be of interest in the cases with short packets. For this reason, in Fig. \ref{fig:finiteLen}, we investigate the effect of the finite length codewords on the communication range under the proposed policy. {Particularly, we define the communication range as the maximum distance of the BNs to the Reader such that the BNs achieve a specified common throughput with certain error probability. Here, we consider  BCNs with BNs located circularly around the Reader.} Using the fundamental results of \cite{Polyanskiy, Makki2014, Makki2016, Haghifam} on the achievable rates of short packets, we replace \eqref{eq:Rn} with \cite[Theorem 45]{Polyanskiy},
\begin{align}\label{eq:RnMod}
R_n(t) \approx W\Bigg[\log\left(1+\mathcal{S}_n(t) \right) -\sqrt{  {1\over {L}}\bigg[1- { \bigg(1+ \mathcal{S}_n(t)\bigg)^{-2}  }\bigg]   } \mathcal{Q}^{-1}(\psi)\Bigg],
\end{align}
which approximates the achievable rate with the finite length codewords. In  \eqref{eq:RnMod}, $L$ denotes the packet length, $\mathcal{Q}^{-1}(.)$ is the inverse $Q$-function and  $\psi$ is the maximum codeword error probability of the decoder. Hence, the second term inside the brackets in \eqref{eq:RnMod} is notable only in the case of codewords with finite length. Then, letting $L\rightarrow \infty$ \eqref{eq:RnMod} is simplified to \eqref{eq:Rn} for the cases with asymptotically long codewords. {Note that  replacing  \eqref{eq:Rn} with the modified rate function \eqref{eq:RnMod}, the results in Theorem \ref{th:optimal} are still applicable and our data admission policy remains optimal. However, Algorithm \ref{alg:ED} is an approximate solution for data link scheduling Problem \eqref{prob:LinkSchedule} with the modified rate function \eqref{eq:RnMod}. }

 Considering  the common throughputs $\bar{U}_c(t) = \{ 3 ,5 \}$ kbps, Fig. \ref{fig:finiteLen} shows the communication range versus the numbers of the BNs $N$. This figure is plotted for codeword lengths $L = \{\infty, 10^4, 10^3, 10^2\}$ and the error probability  $\psi = 10^{-3}$. As seen in Fig. \ref{fig:finiteLen}, in harmony with intuition, while the Shannon's capacity-based evaluations are tight for the cases with long codewords, with finite length codewords the communication range decreases.  {This communication range reduction is almost constant for different number of BNs $N$.} Moreover, the  communication range reduction is more significant  as the common throughput decreases. This is because, with lower signal to interference plus noise  ratio,  the second term inside the brackets in \eqref{eq:RnMod} is more dominant. Finally,  Fig. \ref{fig:finiteLen} shows the tradeoff between communication range and number of BNs. According to this figure, the communication range decreases almost linearly with the number of BNs. 
\begin{figure}
\centering
\includegraphics[width =0.6 \textwidth]{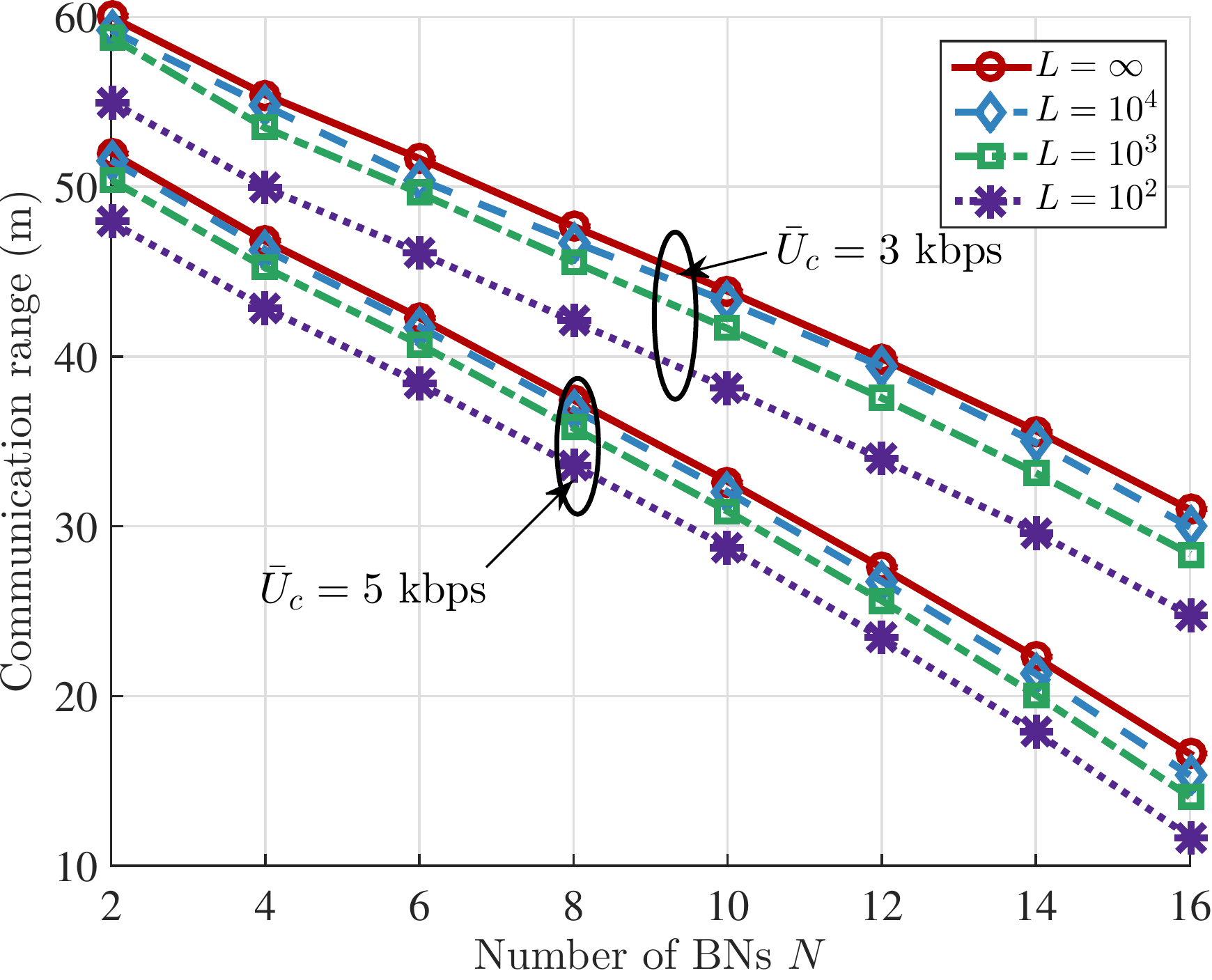}
\caption{ Communication range versus the number of the BNs, $N$ for  codeword lengths $L = \{\infty, 10^4, 10^3, 10^2\}$, target common throughput set to $\bar{U}_c = \{3,5\}$ kbps, and $V = 10^7$. }
\label{fig:finiteLen}
\end{figure}

Considering different codeword lengths $L = \{10^2, 10^3\}$ and  Rician $K$-factors $K=\{0, 1,10,100\}$, Fig. \ref{fig:commVsPerror} shows the average common throughput utility $\bar{U}_c$ versus the maximum  error probability $\psi$ in \eqref{eq:RnMod}. According to this figure, considering different $K$-factors, the utility increases with the error probability almost logarithmically. However, the common throughput utility is more sensitive to the  error probability with low codeword lengths. As the codeword length increases, the sensitivity to the error probability decreases.
Therefore, at lower targeted error probabilities  short codewords can significantly degrade the utility  which should be considered in BCN design. 
Moreover, the common throughput increases as  the Rician $K$-factor increases. That is because the deterministic component of the channel $\bm{h}_n^d(t)$ becomes more dominant as $K$ increases. This relative increment saturates at higher $K$-factors.   Finally, the common throughput is more sensitive to the Rician $K$-factor with longer codewords. 

\begin{figure}
\centering
\includegraphics[width =0.6 \textwidth]{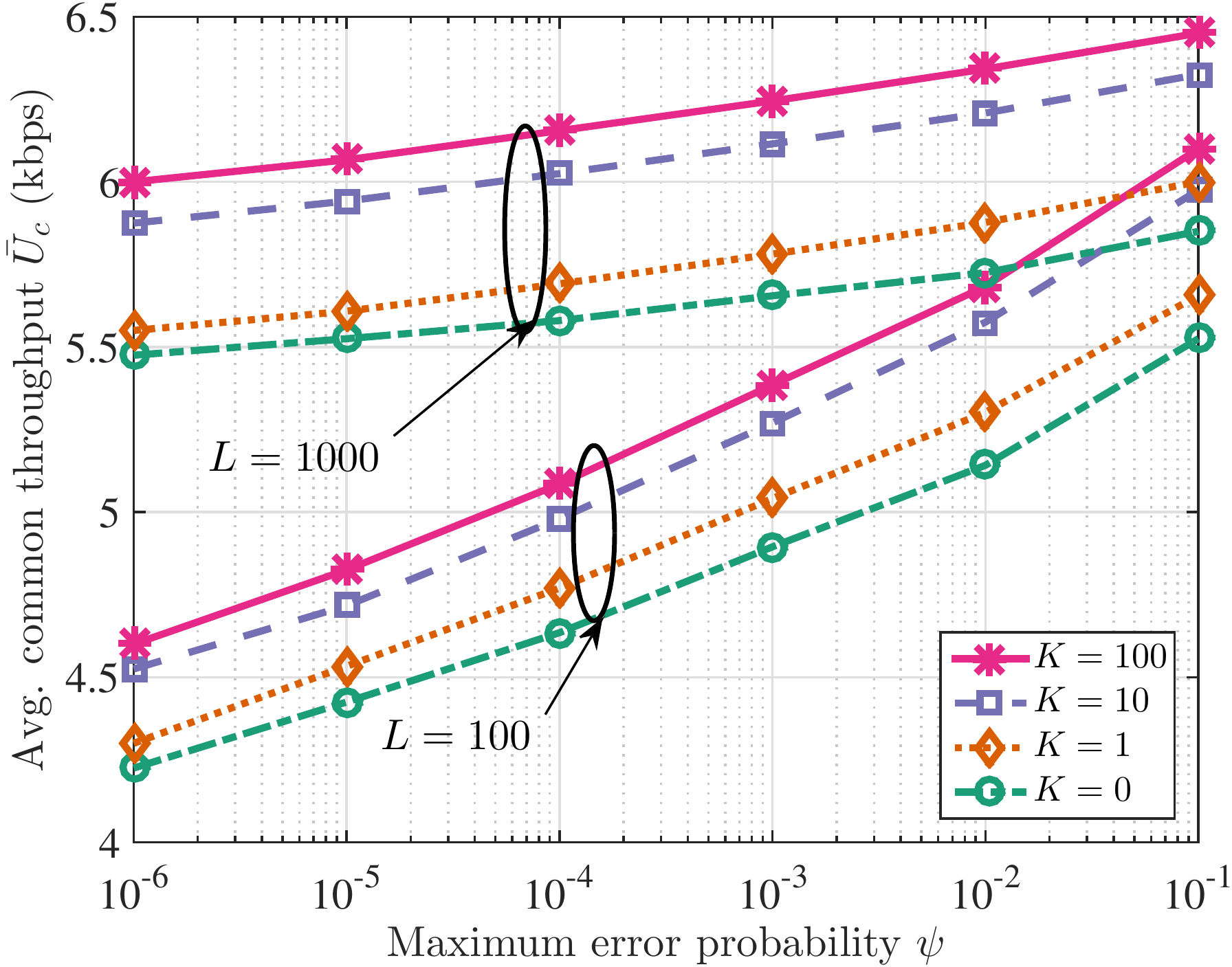}
\caption{ Common throughput utility versus the maximum codeword error probability, $\psi$ for  codeword lengths $L = \{10^2, 10^3\}$, Rician $K$-factors $K\in\{0, 1,10,100\}$ and $V = 10^7$. } 
\label{fig:commVsPerror}
\end{figure}

\section{Conclusion}\label{sec:conclude}
This paper studied data and energy scheduling in a monostatic BCN with a multi-antenna Reader and multiple BNs. The BNs adopt buffers to store their admitted  data  before transmission to the Reader. We proposed data link control and data admission policies for  maximizing the average value of  different  utility functions, including the sum, the proportional and the common throughput utilities.  Through simulation comparisons, we showed the superiority of our proposed scheme compared to state-of-the-art works.   Specifically, we showed that the sum throughput utility under our proposed policy achieves the  maximum sum channel rate in  \cite{Mishra_Sum}, while stabilizing the data buffers. Whereas, our proposed policy for optimizing the common throughput improves the result in \cite{Mishra_Min}. Moreover, using the results on the achievable rates of finite blocklength codewords, we  studied the system performance in the cases with short packets. 

As demonstrated, the proposed policies achieve optimal utility and stabilize the data buffers in the BNs with  few iterations of the data link control algorithm. Moreover, considering the common throughput utility, adopting buffers in BNs enables  more efficient data scheduling and, hence, improves the common throughput.  
Finally, according to our  analysis, the finite length of the codewords  affects the communication range significantly, specifically at low signal to interference plus noise ratios. Also, the utilities are more sensitive to maximum error probability at short codeword lengths, which should be carefully compensated for in the BCN design. 
\appendices
\section{Proof of Lemma \ref{lem:upperbound}}\label{proof:upperbound}
Consider the following chain of inequalities,
\begin{align}\label{eq:uppBoundChain}
\begin{split}
\operatorname{E}\Big\{   L(t+1) - L(t)\Big | \bm{Q}(t) \Big\} &= \frac{1}{2}\sum_{n \in \mathcal{N}} \operatorname{E}\Big\{Q_n^2(t+1)-Q_n^2(t)\Big| \bm{Q}(t)\Big \} \\
&\overset{(a)}{=}\frac{1}{2}\sum_{n \in \mathcal{N}} \operatorname{E}\bigg\{  \big([Q_n(t) - R_n(r)]^+\big)^2 +D^2_n(t) + \\
&\qquad\qquad 2D_n(t)[Q_n(t) - R_n(r)]^+ -Q_n(t)^2\Big| \bm{Q}(t)\bigg\}\\
&\overset{(b)}{\leq}\frac{1}{2}\sum_{n \in \mathcal{N}}\operatorname{E}\bigg\{R_n(t)^2-2Q_n(t)R_n(t)+D_n(t)^2+2D_n(t)Q_n(t) \Big| \bm{Q}(t)\bigg\}\\
&\overset{(c)}{\leq} B + \sum_{n \in \mathcal{N}}\operatorname{E}\bigg\{Q_n(t) (D_n(t) - R_n(t))\Big| \bm{Q}(t)\bigg\},
\end{split}
\end{align}
where $(a)$ results from   \eqref{eq:QueueEvolve}. Inequality $(b)$ in \eqref{eq:uppBoundChain} holds because $\big([Q_n(t) - R_n(t)]^+\big)^2 \leq \big(Q_n(t) - R_n(t)\big)^2 $ and $[Q_n(t) - R_n(t)]^+ \leq Q_n(t)$. Moreover, inequality $(c)$ comes from 
 $$\frac{1}{2}\sum_{n \in \mathcal{N}} \big(R_n^2(t) + D_n^2(t)\big) \leq  \frac{N}{2}\big(R_{max}^2 + D_\text{max}^2(t)\big) = B.$$ Finally, \eqref{eq:upperbound} in Lemma \ref{lem:upperbound} is proved if we subtract $V\operatorname{E}\big\{\mathcal{U}(t) |\bm{Q}(t)\big\}$  from both sides of \eqref{eq:uppBoundChain}.

\section{Proof of Theorem \ref{th:optimal}}\label{proof:theoremOpt}
\paragraph{Proof of the first claim} To prove the first claim of the theorem, we show that under the solution of Problem \eqref{prob:upperBound}, we  always have $D_n(t) = 0$ if $Q_n(t) \geq V$. To avoid clutter, we omit the time slot index $t$. Assume that the buffer level in the $l$-th BN exceeds $V$, i.e., $Q_l \geq V$.
We define two data admission vectors  $\bm{D}^1$  and $\bm{D}^0$ that only differ in the $l$-th element. Specifically,   we have $D_n^1 = D_n^0, \;\forall n \neq l$, $D_l^0 = 0$ and $D_l^1 > 0$.
 With  slightly modified notation, we define $\Delta_u^i \triangleq \Delta_u\big(\bm{f}, \bm{G}, \bm{\alpha}, \bm{D}^i\big), \;i = \{0,1\}$. Note that we have  modified the notation to emphasize that $\Delta_u(.)$ in \eqref{eq:upperbound} is a function $\bm{f}, \bm{G}, \bm{\alpha}$ and $\bm{D}$. Accordingly, we have 
\begin{align}\label{eq:bufferUpp}
\begin{split}
\Delta_u^1  - \Delta_u^0 &\overset{(a)}{=} Q_l(t)D^1_l - V\Big(U_\text{g}\big(\bm{D}^1\big)- U_\text{g}\big(\bm{D}^0\big)\Big) \\
&\overset{(b)}{ \geq} Q_l(t)D^1_l - VD^1_l   = \big(Q_l(t)- V\big)D^1_l  \geq 0,
\end{split}
\end{align}
where, $\text{g}\in\{\text{s},\text{p},\text{c} \}$ indicates the sum, proportional and common throughput utility functions. The equality $(a)$ follows from the definition of $\Delta_u(.)$ in \eqref{eq:upperbound}.  Moreover, $(b)$ results from the fact that the utility functions $U_\text{g}(\bm{D}),\; \text{g}\in\{\text{s},\text{p},\text{c} \}$, are Lipschitz continuous, such that we have
\begin{align}
U_\text{g}\big(\bm{D}^1\big)- U_\text{g}\big(\bm{D}^0\big) \leq \sum_{n \in \mathcal{N}} \big(D^1_n - D_n^0\big) = D^1_l.
\end{align}
Then, \eqref{eq:bufferUpp} implies that $\Delta_u^1 \geq \Delta_u^0$, and hence, a data admission vector with $D_l > 0$ is not optimal. Accordingly, we conclude that $Q_n(t)$ will never exceed $V+ D_\text{max}$.

 \paragraph{Proof of the second claim} The second claim can be proved following the Lyapunov optimization method in \cite[Chapter 4]{Neely}. According to \cite[Theorem 4.5]{Neely}, there is an optimal stationary policy, denoted by random-only-policy, which is only a function of $\bm{h}_n(t),\;\forall n$. Under the random-only-policy, we have $\operatorname{E}\Big\{\mathcal{U}(t)|\bm{Q}(t)\Big\} = \mathcal{U}^\star$ and $\operatorname{E}\Big\{Q_n(t) (D_n(t) - R_n(t))\Big\} \leq 0$. Plugging the random-only-policy in $\Delta_u(t)$ and using \eqref{eq:upperbound}, we have
\begin{align}\label{eq:optWonly}
\Delta_p(L(t)) \leq B - V\mathcal{U}^\star,
\end{align}
where $\Delta_p(L(t)) $ is evaluated under the solution of Problem \eqref{prob:upperBound}. Note that \eqref{eq:optWonly} holds since the solution of Problem \eqref{prob:upperBound} minimizes $\Delta_u(t)$, and hence, $\Delta_p(L(t)) $ under the solution of Problem \eqref{prob:upperBound} is not greater than $\Delta_u(t)$ under all alternative solutions, including the random-only-policy. Averaging both sides of \eqref{eq:optWonly} over $t = 0, \ldots, T-1$, we obtain
\begin{align}\label{eq:optWonly2}
{1 \over T}\sum_{n \in \mathcal{N}}Q^2_n(t)-{1 \over T}\sum_{n \in \mathcal{N}}Q^2_n(0)  -V{1 \over T}\sum_{t = 0}^{T-1}\operatorname{E}\{\mathcal{U}(t)\} \leq B - V\mathcal{U}^\star,
\end{align}
and, rearranging the terms in \eqref{eq:optWonly2} and taking $\lim_{T\rightarrow \infty}$, we have
 \begin{align}\label{eq:optWonly3}
\lim_{T\rightarrow \infty}{1 \over T}\sum_{t = 0}^{T-1}\operatorname{E}\{\mathcal{U}(t)\} \geq \mathcal{U}^\star - {B\over V}.
\end{align}
%
\section{Proof of Proposition \ref{prop:conv}}\label{proof:probConv}
We first show that the value of the function $\sum Q_n(t) R_n(t)$ is non-decreasing in each iteration. To avoid clutter, we omit the time slot index $t$. Moreover, to emphasize the dependence on $\bm{f},\bm{G}$ and $\bm{\alpha}$,
we use the notations $\tilde{R}(\bm{f},\bm{G},\bm{\alpha}, \bm{\gamma})$, $\vardbtilde{R}(\bm{f},\bm{G},\bm{\alpha}, \bm{\gamma}, \bm{y})$ and $R_q(\bm{f},\bm{G}, \bm{\alpha})$  instead of $\tilde{R}(t, \bm{\gamma})$, $\vardbtilde{R}(t, \bm{\gamma}, \bm{y})$ and $\sum Q_n(t) R_n(t)$, respectively. Let $\bm{f}^i, \bm{G}^i$ and $\bm{\alpha}^i$ denote the value of $\bm{f}, \bm{G}$ and $\bm{\alpha}$ at the beginning of the $i$-th iteration. Then, we have 
\begin{align}\label{eq:ChainConv}
\begin{split}
R_q(\bm{f}^i,\bm{G}^i, \bm{\alpha}^i) &\overset{(a)}{=} \tilde{R}(\bm{f}^i,\bm{G}^i,\bm{\alpha}^i, \bm{\gamma}^\star)  \overset{(b)}{=} \vardbtilde{R}(\bm{f}^i,\bm{G}^i,\bm{\alpha}^i, \bm{\gamma}^\star, \bm{y}^\star)\\
& \overset{(c)}{\leq}\vardbtilde{R}(\bm{f}^{i+1},\bm{G}^i,\bm{\alpha}^i, \bm{\gamma}^\star, \bm{y}^\star) \overset{(d)}{\leq}\vardbtilde{R}(\bm{f}^{i+1},\bm{G}^i,\bm{\alpha}^{i+1}, \bm{\gamma}^\star, \bm{y}^\star)\\
& \leq \max_{\bm{y}} \vardbtilde{R}(\bm{f}^{i+1},\bm{G}^i,\bm{\alpha}^{i+1}, \bm{\gamma}^\star, \bm{y})   \overset{(e)}{=}\tilde{R}(\bm{f}^{i+1},\bm{G}^i,\bm{\alpha}^{i+1}, \bm{\gamma}^\star)\\
& \leq \max_{\bm{\gamma}} \tilde{R}(\bm{f}^{i+1},\bm{G}^i,\bm{\alpha}^{i+1}, \bm{\gamma}) \overset{(f)}{=} R_q(\bm{f}^{i+1},\bm{G}^{i}, \bm{\alpha}^{i+1}) \\
& \overset{(g)}{\leq} R_q(\bm{f}^{i+1},\bm{G}^{i+1}, \bm{\alpha}^{i+1}), 
\end{split}
\end{align}
 where $(a)$ holds because $\bm{\gamma}^\star$ maximizes $\tilde{R}(\bm{f}^i,\bm{G}^i,\bm{\alpha}^i, \bm{\gamma})$ and the maximum value reduces to $R_q(\bm{f}^i,\bm{G}^i, \bm{\alpha}^i)$. Similarly, $(b)$ follows from the fact that $\bm{y}^\star$ maximizes $\vardbtilde{R}(\bm{f}^i,\bm{G}^i,\bm{\alpha}^i, \bm{\gamma}^\star, \bm{y})$, and the maximum value reduces to $\tilde{R}(\bm{f}^i,\bm{G}^i,\bm{\alpha}^i, \bm{\gamma}^\star)$. Inequalities $(c)$ and $(d)$ hold since $\bm{f}^{i+1}$ and $\bm{\alpha}^{i+1}$ maximize $\vardbtilde{R}(.)$ while the other variables are fixed. The equalities $(e)$ and $(f)$ are concluded similar to $(b)$ and $(a)$, respectively. Finally, $(g)$ follows since $\bm{G}^{i+1}$ maximizes $R_q(\bm{f}^{i+1},\bm{G}, \bm{\alpha}^{i+1})$. According to \eqref{eq:ChainConv}, the value of the sum $\sum Q_n(t) R_n(t)$  is non-decreasing over successive iterations. Moreover, since $\sum Q_n(t) R_n(t)$ is upper-bounded, the iterations converge. 
\bibliographystyle{IEEEtran}
\bibliography{Bibs}

\end{document}